\newif\ifsubmission
\newif\ifpreprint
\declaretheorem[name=Theorem]{thm}
\begin{document}

\title{Teaching signal synchronization in deep neural networks with prospective neurons}

\ifsubmission
    \author{
      Nicolas Zucchet$^{1, \dagger}$ \quad
      Qianqian Feng$^{2, \dagger}$ \quad
      Axel Laborieux$^{3}$ \\
      Friedemann Zenke$^{3,4}$ \quad
      Walter Senn$^{5,\ddagger}$ \quad
      João Sacramento$^{6,\ddagger}$ \\[1.5em]
      \small $^1$Computer Science Department, ETH Zürich \\
      \small $^2$Gatsby Computational Neuroscience Unit, University College London\\
      \small $^3$Friedrich Miescher Institute for Biomedical Research \\
      \small $^4$Faculty of Science, University of Basel \\
      \small $^5$Department of Physiology, University of Bern \\
      \small $^6$Google, Paradigms of Intelligence Team \\[1em]
      \small $^\dagger$These authors contributed equally to this work \\
      \small $^\ddagger$These authors contributed equally to this work \\[1em]
    }
\fi
\ifpreprint
    \author{
      Nicolas Zucchet$^{1,}$\thanks{\vspace{0.5cm}Shared first authorship. $^\dagger$ Shared senior authorship. Correspondance to \texttt{nzucchet@ethz.ch}.} \quad
      Qianqian Feng$^{2, \,*}$ \quad
      Axel Laborieux$^{3}$ \\
      Friedemann Zenke$^{3,4}$ \quad
      Walter Senn$^{5,\dagger}$ \quad
      João Sacramento$^{6,\dagger}$ \\[1.5em]
      \small $^1$Computer Science Department, ETH Zürich \\
      \small $^2$Gatsby Computational Neuroscience Unit, University College London\\
      \small $^3$Friedrich Miescher Institute for Biomedical Research \\
      \small $^4$Faculty of Science, University of Basel \\
      \small $^5$Department of Physiology, University of Bern \\
      \small $^6$Google, Paradigms of Intelligence Team
      \vspace{-0.3cm}
    }
\fi

\date{}

\maketitle

\begin{abstract}
\noindent
Working memory requires the brain to maintain information from the recent past to guide ongoing behavior. Neurons can contribute to this capacity by slowly integrating their inputs over time, creating persistent activity that outlasts the original stimulus. However, when these slowly integrating neurons are organized hierarchically, they introduce cumulative delays that create a fundamental challenge for learning: teaching signals that indicate whether behavior was correct or incorrect arrive out-of-sync with the neural activity they are meant to instruct. Here, we demonstrate that neurons enhanced with an adaptive current can compensate for these delays by responding to external stimuli prospectively -- effectively predicting future inputs to synchronize with them. First, we show that such prospective neurons enable teaching signal synchronization across a range of learning algorithms that propagate error signals through hierarchical networks. Second, we demonstrate that this successfully guides learning in slowly integrating neurons, enabling the formation and retrieval of memories over extended timescales. We support our findings with a mathematical analysis of the prospective coding mechanism and learning experiments on motor control tasks. Together, our results reveal how neural adaptation could solve a critical timing problem and enable efficient learning in dynamic environments.
\end{abstract}

\ifpreprint
    \break
\fi 

Understanding how the brain learns and adapts to new situations is a central question in neuroscience. Effective learning requires associating current neural activity with the delayed consequences of behavior, whether from inherent factors such as delayed rewards or processing delays within the brain. Such processing delays are accentuated by the brain's hierarchical organization, where different areas specialize in various aspects of information processing \citep{kandel_principles_2000}. Indeed, working memory requires neurons to slowly integrate their inputs over time to maintain information from the recent past, but these slow dynamics introduce processing delays that accumulate across hierarchical levels. This creates a fundamental problem for learning: when teaching signals -- whether external rewards or internal error signals -- arrive at output-related areas, the neural activity in earlier layers is already out of sync due to cumulative processing delays, creating a temporal mismatch between the activity that should be instructed and the teaching signals guiding learning.

Error backpropagation \citep{werbos_beyond_1974, rumelhart_learning_1986} has emerged as the canonical algorithm for solving the spatial credit assignment problem in deep artificial neural networks, with numerous recent propositions exploring its potential biological implementations \cite[e.g.,][]{lee_difference_2015, lillicrap_random_2016, scellier_equilibrium_2017, whittington_approximation_2017, richards_deep_2019, meulemans_least-control_2022}. However, these models typically assume instantaneous transmission, failing to account for the slow dynamics of the neurons responsible for working memory. To address this limitation, researchers have proposed solutions including forward models \citep{miall_forward_1996, wolpert_internal_1998, gilra_predicting_2017}, eligibility traces \citep{izhikevich_solving_2007, gerstner_eligibility_2018, zenke_superspike_2018, bellec_solution_2020}, and prospective neurons \citep{mi_spike_2014, haider_latent_2021, senn_neuronal_2024}. In this work, we focus on the latter, examining how prospective neurons can mitigate delays, both theoretically and practically, and investigating how adaptive neurons can serve as practical implementations of such prospective mechanisms.

We structure the paper as follows. First, we focus on the theoretical foundations of prospective neurons by formalizing delay as a deviation from an ideal trajectory that needs to be tracked -- specifically, what neural activity would look like if no processing delays existed. Through this lens, we demonstrate that prospective neurons can effectively circumvent delays after an initial warm-up phase, whereas standard slowly integrating neurons prove insufficient. Next, we explore adaptive neurons as physically realistic implementations of such prospective neurons, analyzing how this physical implementation impacts both tracking and learning performance. Finally, we validate our theoretical insights on a range of behaviorally relevant motor control learning problems, including scenarios with delayed rewards that require tuning the memory storage process. Our findings reveal that the tracking ability of prospective neurons enables effective teaching signal synchronization, thus facilitating online learning in neural networks and providing insights into how biological systems might overcome internal signal propagation delays.
\section{Theoretical properties of prospective neurons}
\label{sec:theory}

Biological neural networks comprise dynamical neurons that integrate information over time, with leaky integrators \citep{abbott_lapicques_1999} standing as the archetypal model for such neurons. The temporal integration of neural signals presents a critical challenge, particularly for neurons responsible for propagating errors, as it results in a misalignment between neural activities and their corresponding learning signals. In this section, we first formalize such delays as deviations from a reference trajectory obtained under an idealized, instantaneous system. We then analyze how the network structure, neuronal characteristics, and external stimuli properties interplay to influence these deviations. Finally, we investigate prospective dynamics as a potential solution to mitigate temporal delays.

\subsection{Teaching signal synchronization as a tracking problem}
\label{subsec:synchro_as_tracking}

While our ideas apply to a broad class of plasticity rules, we focus our exposition on gradient-based learning and the backpropagation algorithm, as it is the canonical algorithm for learning deep neural networks. In this context, teaching signals are error signals derived from an objective function.

Let us consider a feedforward neural network whose neurons' voltages $u^l$ at layer $l$ are determined by
\begin{equation}
    \label{eqn:BP-forward}
    u^{l+1} = W^{l+1} \rho(u^l) ~~ \mathrm{and} ~ u^0 = x
\end{equation}
where $\rho$ is a nonlinear activation function, $W^{l+1}$ the synaptic strength matrix, and $x$ the input of the network, which we consider fixed for now. The voltage of the last layer, $u^L$, is taken to be the output of the network and is compared with a target $y$, through the loss function $\ell (u^L, y)$. Error signals $\delta$ are backpropagated through the network hierarchy with
\begin{equation}
    \label{eqn:BP-backward}
    \delta^{l} = \rho'(u^l) {W^{l+1}}^\top \delta^{l+1} ~~ \mathrm{and} ~ \delta^L = \nabla  \ell(u^L, y).
\end{equation}
Those equations follow from defining the gradient of the loss with respect to the postsynaptic voltages, $\delta^l := \nabla_{u^l} \,\ell$, and recursively applying the chain rule to this quantity. When using the mean squared error loss, the error signal for the output units becomes $\delta^L = u^L - y$.
Error signals can then be combined with voltages to obtain the gradient following update $\Delta W^{l+1} \propto -\nabla_{W^l} \,\ell(u^L, y) = -\delta^{l+1} {\rho(u^l)}^\top$.
While these equations provide a theoretically grounded learning rule, they ignore temporal dynamics: voltages and errors are assumed to propagate instantaneously through the network hierarchy.
It can be justified by a separation of timescales argument: the input $x$ and target output $y$ evolve at a slower rate than the neurons, so that they can be considered constant. Through the instantaneous nature of signal processing in this model, error signals always arrive on time.

However, realistic models of learning should take into account scenarios in which external stimuli evolve faster than the response time of the considered neural network. The separation of timescales argument mentioned above thus no longer holds. We must take into account the temporal dynamics of neural activity. Here, we study a standard leaky integrator model, in which $(u_t, \delta_t)$ satisfies the differential equation
\begin{equation}
    \label{eqn:continuous-time-BP}
    \begin{split}
        \tau \dot{u}_t &= -u_t + W\rho(u_t) ~~\mathrm{and}~~ u_t^0 = x_t\\
        \tau \dot{\delta}_t &= -\delta_t + \rho'(u_t)  W^\top \delta_t  ~~\mathrm{and}~~ \delta^L_t = \nabla \ell(u^L_t, y_t)
    \end{split}
\end{equation}
where $\dot{u}$ (resp. $\dot{\delta}$) stands for the temporal derivative of $u$ (resp. $\delta$), and $\tau$ is the neuronal integration time constant, which we consider to be the same for all $u$ and $\delta$ for simplicity. For conciseness, we have concatenated the voltages $u^l$ (resp. $\delta^l$) of all layers into a single vector $u$ (resp. $\delta$). The matrix $W$ contains all the $W^l$ matrices in its lower block diagonal. Equilibrium states of the dynamics (\ref{eqn:continuous-time-BP}) satisfy the backpropagation equations (\ref{eqn:BP-forward}) and (\ref{eqn:BP-backward}), but they are never reached when $(x_t, y_t)$ evolves rapidly over time. As a corollary, error signals always arrive with some delay, which increases with the depth of the network and the characteristic time constants of the neurons involved.

We formalize the intuition presented above by considering the trajectory of neural activities obtained by instantaneously processing external stimuli as an ideal target that the true neural activity should follow. Stated as such, neurons must solve a tracking problem, and tracking error becomes a clear metric to quantify delay. In the perfect tracking limit, the network behaves like its instantaneous counterpart. Formally, let $s_t$ be a vector summarizing the state of the system variables at time $t$ (the voltages $u_t$ and error signals $\delta_t$ in our previous example). The input $s$ receives information from both external stimuli and the rest of the network is $f_\theta(s_t, t)$. It depends on the parameters $\theta$ of the network (e.g., the synaptic weights $W$), as well as on $x_t$ and $y_t$ through $t$. We define a \textit{target trajectory} $s^\ast$ through the self-consistency equation
\begin{equation}
    \label{eqn:equilibrium-tracking}
    s_t^\ast = f_\theta(s_t^\ast, t)
\end{equation}
for all $t$. The instantaneous backpropagation equations (\ref{eqn:BP-forward}) and (\ref{eqn:BP-backward}), as well as the ones underlying a large variety of more biologically plausible learning algorithms, fit under this framework, as we shall see in Section~\ref{subsec:learning-teacher-student}. It is important to note that the target trajectory $s^*$ is implicitly defined, in contrast to more traditional control problems where targets are directly provided to the system. Additionally, when $f_\theta$ encodes cyclic dependencies through, e.g., recurrent connections, multiple such trajectories can exist.

\subsection{Leaky integrators cannot track target trajectories} 
\label{sub-sec:flow}

\begin{figure}[t]
    \centering
    \includegraphics[scale=1.2]{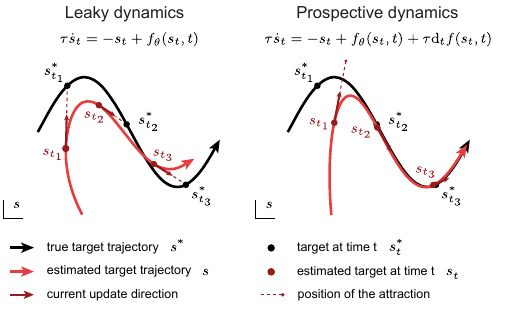}
    \caption{\textbf{Prospective dynamics can track the target trajectory whereas leaky dynamics cannot.} Under the leaky integrator dynamics, the current state $s_t$ is attracted towards the current target $s_t^\ast$. By the time it arrives there, the target will have already moved away, so $s$ is always lagging behind $s^\ast$. Adding a prospective input to the leaky dynamics enables perfect tracking, as the position of $s^\ast$ in the near future is indirectly estimated and will serve as a target for the $s$ dynamics.}
    \label{fig:intuition}
\end{figure}

We demonstrate that leaky integrators satisfying the differential equation $\tau \dot{s}_t = -s_t + f_\theta(s_t, t)$ cannot track the reference trajectory without a strict separation of timescales. Specifically, Theorem~\ref{thm:problem_flow} states that, under mild regularity assumptions on the structure of $f_\theta$, the target trajectory $s_t^\ast$ can only be approached up to some error. This error decreases when the network gets faster or external inputs, such as $x_t$, slow down.
\begin{thm}
    \label{thm:problem_flow}
    Let $f_\theta$ be such that the largest eigenvalue of the symmetric part of the Jacobian $\partial_s f_\theta(s, t)$ is always smaller than a constant $1-\mu$ for $\mu>0$. Let $s^\ast_t$ be a target trajectory that satisfies $s^\ast_t = f_\theta(s^\ast_t, t)$ for all $t$ and $\lVert \dot{s}^\ast_t \rVert \leq \gamma$. Then, the trajectory of states $s_t$ obtained by integrating the leaky dynamics $\tau \dot{s}_t = -s_t + f_\theta(s_t, t)$ satisfies
    \begin{equation*}
        \underset{t\rightarrow \infty}{\mathrm{limsup}} ~ \lVert s_t - s^\ast_t \rVert \leq \frac{\gamma\tau}{\mu}.
    \end{equation*}
    Furthermore, this bound is tight; that is, we can find a $f$ for which the upper bound is reached.
\end{thm}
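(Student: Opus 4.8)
The plan is to bound the tracking error $e_t := s_t - s^\ast_t$ by means of a quadratic Lyapunov function, which reduces the problem to a one-dimensional linear differential inequality. First I would subtract the two dynamics: using the self-consistency equation $s^\ast_t = f_\theta(s^\ast_t, t)$,
\begin{equation*}
\tau \dot e_t = -e_t + \bigl(f_\theta(s_t,t) - f_\theta(s^\ast_t,t)\bigr) - \tau \dot s^\ast_t .
\end{equation*}
Setting $V_t := \tfrac12 \lVert e_t \rVert^2$, I obtain $\tau\dot V_t = -\lVert e_t\rVert^2 + \langle e_t,\, f_\theta(s_t,t) - f_\theta(s^\ast_t,t)\rangle - \tau\langle e_t,\, \dot s^\ast_t\rangle$. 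The middle term is the crux: by the fundamental theorem of calculus $f_\theta(s_t,t) - f_\theta(s^\ast_t,t) = J_t\, e_t$ with $J_t := \int_0^1 \partial_s f_\theta(s^\ast_t + \lambda e_t,\, t)\, d\lambda$, and since for every $\lambda$ the symmetric part of $\partial_s f_\theta$ has largest eigenvalue at most $1-\mu$, so does the symmetric part of the average $J_t$ (for any unit vector $v$, $v^\top\!\operatorname{sym}(J_t)v = \int_0^1 v^\top\!\operatorname{sym}(\partial_s f_\theta)v\,d\lambda \le 1-\mu$); hence $\langle e_t, J_t e_t\rangle \le (1-\mu)\lVert e_t\rVert^2$. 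The Cauchy--Schwarz inequality bounds the last term by $\tau\gamma\lVert e_t\rVert$, so
\begin{equation*}
\tau\dot V_t \le -\mu \lVert e_t\rVert^2 + \tau\gamma\lVert e_t\rVert .
\end{equation*}

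Next I would pass to a scalar comparison. Writing $r_t := \lVert e_t\rVert$, the bound above reads $\tau \dot r_t \le -\mu r_t + \tau\gamma$ wherever $r_t > 0$; to deal with the non-smoothness of $t\mapsto r_t$ at its zeros I would instead work with the $C^1$ surrogate $g_\epsilon(t) := \sqrt{\lVert e_t\rVert^2 + \epsilon^2}$, for which the same computation gives $\tau \dot g_\epsilon \le -\mu g_\epsilon + \tau\gamma + \mu\epsilon$ (using $\lVert e_t\rVert \le g_\epsilon$ and $g_\epsilon \ge \epsilon$). Grönwall's inequality then yields $g_\epsilon(t) \le \tfrac{\tau\gamma}{\mu} + \epsilon + \bigl(g_\epsilon(0) - \tfrac{\tau\gamma}{\mu} - \epsilon\bigr)e^{-\mu t/\tau}$, so $\limsup_{t\to\infty}\lVert e_t\rVert \le \limsup_{t\to\infty} g_\epsilon(t) \le \tfrac{\tau\gamma}{\mu} + \epsilon$, and letting $\epsilon \downarrow 0$ gives the claimed bound. (The same estimate shows $V_t$ stays bounded, ruling out finite-time blow-up, so the leaky dynamics are globally well-posed under the implicit assumption that $f_\theta$ is locally Lipschitz.) The main obstacle is more bookkeeping than depth: it is the averaging argument for $J_t$, which requires $f_\theta(\cdot,t)$ to be $C^1$ — part of the stated mild regularity — together with the zero-set issue, both handled cleanly as above; the remainder is a routine linear comparison and a $\limsup$.

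Finally, for tightness I would exhibit an explicit one-dimensional example. Take $f_\theta(s,t) = (1-\mu)s + \mu\gamma t$, whose Jacobian is the constant $1-\mu$; the target trajectory solving $s^\ast_t = f_\theta(s^\ast_t,t)$ is $s^\ast_t = \gamma t$, with $\lvert \dot s^\ast_t\rvert = \gamma$. The leaky dynamics become $\tau \dot s_t = -\mu s_t + \mu\gamma t$, a linear ODE with particular solution $s_t = \gamma t - \tau\gamma/\mu$ and transient decaying as $e^{-\mu t/\tau}$; hence $\lVert s_t - s^\ast_t\rVert \to \tau\gamma/\mu$, so the upper bound is attained. (If the Jacobian condition is read strictly, using $1-\mu-\varepsilon$ in place of $1-\mu$ and letting $\varepsilon\downarrow 0$ shows the bound still cannot be improved.)
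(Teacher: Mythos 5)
Your proposal is correct and follows essentially the same route as the paper's proof: the fundamental-theorem-of-calculus averaging of the Jacobian to exploit the eigenvalue bound on its symmetric part, the Lyapunov function $\tfrac12\lVert s_t-s^\ast_t\rVert^2$ with Cauchy--Schwarz on the $\dot s^\ast_t$ term, a scalar comparison to extract the $\limsup$, and a one-dimensional affine example for tightness (the paper's is the special case $\mu=1/2$, $\tau=1$, $\gamma=2$ of yours). Your Gr\"onwall argument with the smooth surrogate $g_\epsilon$ is a slightly more careful version of the paper's sign-of-$\dot V$ argument, but the substance is identical.
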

This result formalizes the intuition that the trajectory $s$ obtained through leaky integration lags behind the target $s^*$. More precisely, $s$ asymptotically lies in a tube centered on $s^\ast$. Intuitively, this occurs as $s$ is attracted to the current target, which will have already moved away when reached, cf. Figure~\ref{fig:intuition}. The ratio $\tau/\mu$ reflects the effective time constant of the system: $\tau$ corresponds to the time constant of individual neurons, while $1/\mu$ is linked to network synaptic weights. For instance, as the depth of the network increases, so does $1/\mu$. Under constant external inputs, this constant also represents the worst exponential convergence rate that the system can exhibit. The $\gamma$ constant reflects both network geometry and the rate of change in external inputs. We provide more details on these considerations and a proof of Theorem~\ref{thm:problem_flow} in Appendix~\ref{app:theoretical_derivations}.

This result provides a tight upper bound for the tracking error $\lVert s_t - s_t^\ast \rVert$. We now verify whether the characterization it provides holds in practice. To this end, we consider a fully connected feedforward neural network with two hidden layers, whose neural activity and corresponding error signals follow the dynamics of (\ref{eqn:continuous-time-BP}). We provide random linear combinations of sine waves as input to the network and compare its outputs with those of an instantaneous feedforward teacher network receiving the same input. As the quantities involved in the theorem can be costly to estimate directly, we use proxies in practice. We consider the maximal value of $\lVert-s_t + f_\theta(s_t, t) \rVert$ after an initial convergence phase as a proxy for the tracking error of Theorem~\ref{thm:problem_flow}. This is a reasonable choice, as these quantities are linearly correlated under the theorem's assumptions. We refer to this metric as the \textit{empirical tracking error} and use it to measure how out-of-sync error signals arrive. 

We vary two parameters: the time constant $\tau$ of the neurons and the typical angular velocity $\omega_0$ at which the sinusoidal inputs vary (used as a proxy for $\gamma$; see Methods). All other factors, including neural network weights, remain fixed. Figure~\ref{fig:theory}A illustrates how the empirical tracking error evolves as a function of $\tau$ and $\omega_0$. Our empirical results confirm the theory: better tracking is only reliably achievable through timescale separation, that is, as $\tau \omega_0$ approaches 0.

\begin{figure}
    \centering
    \includegraphics[scale=1.2]{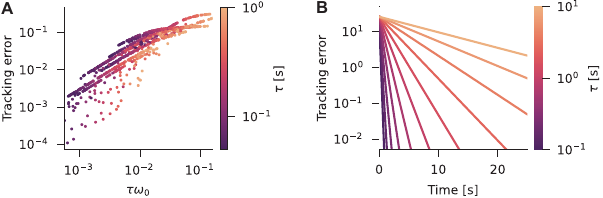}
\caption{\textbf{Theoretical predictions on tracking properties of leaky and prospective neurons hold in simulation}. \textbf{A}. Empirical tracking error for the leaky dynamics (\ref{eqn:continuous-time-BP}) as a function of the neurons time constant $\tau$ times the characteristic angular velocity of the inputs $\omega_0$. This quantity is directly linked to the theoretical bound of Theorem~\ref{thm:problem_flow}. Each dot corresponds to random $\tau$ and $\omega_0$ values. As predicted by Theorem~\ref{thm:problem_flow}, the tracking error scales linearly with $\tau\omega_0$ in the worst case, which measures how slow the neurons are compared to the input. Consequently, teaching signals arrive out-of-sync. \textbf{B}. Under the prospective dynamics, the instantaneous empirical tracking error converges exponentially fast to 0, with the neurons time constant $\tau$ modulating the convergence speed, consistently with Theorem~\ref{thm:perfect_tracking}.}\label{fig:theory}
\end{figure}

\subsection{Prospective neurons can track target trajectories}
\label{sub-sec:prospective}

As argued in the previous section, vanilla leaky integrator dynamics cannot achieve such tracking as the state $s_t$ is attracted to the current target and is thus always lagging behind the target $s^\ast_t$. Instead, if we manage to estimate what the target state will be in the future and use it as an implicit target for neural activity, the gap between $s_t$ and $s_t^\ast$ will progressively vanish, as illustrated in Figure~\ref{fig:intuition}. In the following, we show that slightly modifying the leaky dynamics by adding a prospective input to the inputs that leaky neurons receive will asymptotically (on a time scale of $\tau$) lead to perfect target tracking. As a consequence, prospective neurons behave like instantaneous neurons when given enough adaptation time.

More concretely, we introduce prospective dynamics that estimate what the input $f_\theta(s_t, t)$ each neuron receives will be in $\tau$ seconds through a first-order approximation and add it to the input. That is, we consider the state dynamics
\begin{equation}
    \label{eqn:prospective_dynamics}
    \tau \dot{s}_t = -s_t + f_\theta(s_t, t) + \tau \mathrm{d}_t f_\theta(s_t, t).
\end{equation}
This extra input enables leaky neurons to perfectly track equilibrium, as demonstrated in Theorem~\ref{thm:perfect_tracking}; see Appendix~\ref{app:theoretical_derivations} for a proof. Figure~\ref{fig:theory}B shows that the Theorem holds in practice, using the same experimental setup as in the previous section.
\begin{thm}
    \label{thm:perfect_tracking}
    Let $s$ follow the prospective dynamics (\ref{eqn:prospective_dynamics}). Then, assuming that $\partial_s f_\theta(s_t,t)$ is always invertible and $f_\theta$ is Lipschitz continuous in $t$ on that trajectory, we have 
    \begin{equation*}
        \lVert s_t - f_\theta(s_t, t) \rVert = \lVert s_0 - f_\theta(s_0, 0) \rVert \exp \left (-\frac{t}{\tau} \right )
    \end{equation*}
    and
    \begin{equation*}
        \underset{t \rightarrow \infty}{\mathrm{limsup}} ~ \lVert s_t - s_t^\ast \rVert = 0,
    \end{equation*}
    i.e., after an initial exponential convergence phase, $s_t$ and $s_t^*$ coincide.
\end{thm}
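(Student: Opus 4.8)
The plan is to track the \emph{residual} $e_t := s_t - f_\theta(s_t, t)$ rather than $s_t$ itself, exploiting that the extra term $\tau\,\mathrm{d}_t f_\theta(s_t,t)$ in \eqref{eqn:prospective_dynamics} is by construction exactly $\tau$ times the total time derivative of $r \mapsto f_\theta(s_r,r)$ evaluated along the solution, i.e.\ $\mathrm{d}_t f_\theta(s_t,t) = \partial_s f_\theta(s_t,t)\,\dot s_t + \partial_t f_\theta(s_t,t)$. Differentiating the residual and using the chain rule gives $\dot e_t = \dot s_t - \mathrm{d}_t f_\theta(s_t,t)$. Rewriting \eqref{eqn:prospective_dynamics} as $\tau\dot s_t = -e_t + \tau\,\mathrm{d}_t f_\theta(s_t,t)$ (since $-s_t + f_\theta(s_t,t) = -e_t$) and substituting makes the prospective term cancel, leaving the autonomous linear ODE $\tau\dot e_t = -e_t$. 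Integrating componentwise yields $e_t = e_0\exp(-t/\tau)$ exactly, hence $\lVert s_t - f_\theta(s_t,t)\rVert = \lVert s_0 - f_\theta(s_0,0)\rVert\exp(-t/\tau)$, which is the first identity. This step needs only enough regularity for $\dot s_t$ and $\mathrm{d}_t f_\theta$ to exist along the trajectory: Lipschitz continuity in $t$, plus the invertibility hypothesis, which is used here to solve the implicit equation \eqref{eqn:prospective_dynamics} for $\dot s_t$ and obtain a bona fide flow.

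For the second claim I would convert the exponential decay of the residual into convergence to a target trajectory. Introduce $g_t(s) := s - f_\theta(s,t)$, so that the defining relation \eqref{eqn:equilibrium-tracking} reads $g_t(s^\ast_t) = 0$, while we have just shown $g_t(s_t) = e_t \to 0$. Since the Jacobian $\partial_s g_t = I - \partial_s f_\theta(\cdot,t)$ is invertible along the trajectory, the inverse function theorem yields, for $t$ large enough that $e_t$ is small, a local inverse of $g_t$ near $s^\ast_t$ whose Lipschitz constant is controlled by $\lVert (I-\partial_s f_\theta)^{-1}\rVert$; comparing $s_t = g_t^{-1}(e_t)$ with $s^\ast_t = g_t^{-1}(0)$ then gives $\lVert s_t - s^\ast_t\rVert \le C\,\lVert e_t\rVert$ with $C$ uniform in $t$, and the right-hand side vanishes. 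A continuity/Gr\"onwall argument, leaning on the Lipschitz-in-$t$ assumption, is what supplies the uniform bound on the inverse and certifies that $s_t$ remains in the neighbourhood where this local inversion is valid for all large $t$.

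The first part is essentially an exact identity, so the real work is the second part, and I expect three points to require care. (i) \emph{Well-posedness}: because $\mathrm{d}_t f_\theta$ depends on $\dot s_t$, \eqref{eqn:prospective_dynamics} is implicit, and one must invert $I-\partial_s f_\theta$ to recast it in explicit form before invoking standard existence theory — note it is this matrix, rather than $\partial_s f_\theta$ alone, that must be nonsingular, which is automatic under the spectral condition of Theorem~\ref{thm:problem_flow}. (ii) \emph{Uniformity}: the estimate $\lVert s_t - s^\ast_t\rVert \le C\lVert e_t\rVert$ is only informative if $C$ does not blow up as $t\to\infty$, which needs a uniform lower bound on the singular values of $I-\partial_s f_\theta$ and a uniform modulus of continuity in $t$ along the trajectory. (iii) \emph{Selection of the target}: as observed after \eqref{eqn:equilibrium-tracking}, several solutions of $s^\ast_t = f_\theta(s^\ast_t,t)$ may coexist, so the statement is to be read as convergence to the branch near which the solution stays; the local inverse function argument automatically singles out that branch, which is the cleanest way to dispose of the ambiguity.
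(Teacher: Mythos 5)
Your proof is correct and follows the same route as the paper: both rewrite the prospective dynamics as $\tau\,\mathrm{d}_t\left[s_t - f_\theta(s_t,t)\right] = -\left[s_t - f_\theta(s_t,t)\right]$ and integrate the resulting linear ODE for the residual to get the exact exponential decay. The paper dispatches the second conclusion in a single sentence (``the assumptions on $f$ ensure that $s_t^*$ is well defined and evolves continuously''), so your inverse-function-theorem argument converting residual decay into $\lVert s_t - s_t^\ast\rVert \to 0$ --- and your observation that well-posedness actually hinges on the invertibility of $\mathrm{Id} - \partial_s f_\theta$ rather than of $\partial_s f_\theta$ alone --- supply details the paper leaves implicit.
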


We conclude this section by noting that prospective dynamics precede our work. Most closely related are \cite{mi_spike_2014}, \cite{haider_latent_2021}, and \cite{senn_neuronal_2024}. The discussion section details connections with existing work from theoretical neuroscience as well as from other fields.

\section{Bio-physical implementation of prospective neurons}
\label{sec:physical_constraints}

So far, we have shown that prospective inputs enable leaky neurons to solve the tracking problem inherent to teaching signal synchronization, but we have ignored all details regarding their physical implementation.
We now delve into two of these crucial details: We first show how a mechanism appearing in adaptive neuron models \citep{brette_adaptive_2005, gerstner_neuronal_2014} can implement prospective dynamics plausibly by subtracting from the instantaneous input current a low-pass filtered version of itself, provided by an adaptation current.
Then, we study the robustness of the proposed dynamics to mismatches between the neurons' and prospective inputs' time constants, since we cannot expect them to perfectly match in biological neurons.

\subsection{Adaptive neurons can be prospective}

Prospective dynamics are a theoretical ideal that enables the efficient tracking of the instantaneous neural activity trajectory and thus remove delays, but they require the physical estimation of time derivatives. In analog filter design \citep{agarwal_foundations_2005}, it is well established that true differentiators are not physically realizable, as they require an infinite amount of energy. One simple, realizable differentiator is the high-pass filter. More precisely, if $a$ is a low-pass filter of $f_\theta(u_t, t)$ following
\begin{equation}
    \label{eqn:adaptivea}
    \tau_a \dot{a}_t = -a_t + f_\theta(u_t, t),
\end{equation}
then $\tau_a^{-1} (f_\theta(u_t, t) - a_t)$ converges to $\mathrm{d}_t f_\theta(u_t, t)$ when $\tau_a$ converges to 0. While it introduces some approximation in the estimation of the $\mathrm{d}_t f_\theta(u_t, t)$, it is more robust to noise. Plugging this estimate into the prospective dynamics (\ref{eqn:prospective_dynamics}) gives
\begin{equation}
    \label{eqn:adaptive}
    \begin{split}
        \tau \dot{u}_t &= -u_t + \left ( 1 + \frac{\tau}{\tau_a} \right )f_\theta(u_t, t)  - \frac{\tau}{\tau_a} a_t
    \end{split}
\end{equation}
Such neural dynamics are reminiscent of adaptive neuron models, in which $a$ can be interpreted as an adaptive current. There exist subtle differences with the adaptive neuron model: the adaptive current here integrates the input current $f_\theta$ and not the voltage, the leak term is linear, and the input and adaptive currents have specific weighting factors. We discuss potential physiological implementations of this mechanism in the discussion.

Next, we wonder how effective such adaptive neurons are at tracking the target trajectory and, in particular, whether the adaptive current improves the tracking ability of leaky neurons. To that extent, we study the linearization of the adaptive dynamics around $\tau_a = 0$ up to the first order in $\tau_a$: we show in Appendix~\ref{app:physical_implementation} that
\begin{equation}
    \tau \dot{u}_t = -u_t + f_\theta(u_t, t) + \tau \mathrm{d}_t f_\theta(u_t, t) + \tau \tau_a \mathrm{d}_t^2 f_\theta(u_t, t) + O(\tau \tau_a^2)
\end{equation}
Intuitively, up to a first-order approximation, the tracking error of adaptive dynamics scales with $\tau_a \tau$ when it scales with $\tau$ for the vanilla leaky dynamics (Theorem \ref{thm:problem_flow}). For small enough $\tau_a$, the former are thus better trackers than the latter. Additionally, biological neurons may be implementing more elaborate differentiating circuits, which would make the gap even larger.

We now measure the impact of adding an adaptive current and that of the integration time on the tracking error effect, as shown in Figure~\ref{fig:teacher_student}C, using the teacher-student setting from the previous section. We vary the neuronal time constant $\tau$ and consider different $\tau_a/\tau$ ratios. We make two observations: First, for all $\tau$ values considered, the adaptive dynamics are better trackers than the leaky dynamics, even when $\tau_a$ is greater than $\tau$. Second, the tracking error also approaches 0 more quickly as $\tau$ converges to 0.
Those results confirm that the addition of adaptive currents to leaky neurons improves their tracking abilities.

\begin{figure*}[t]
    \centering
    \includegraphics[scale=1.2]{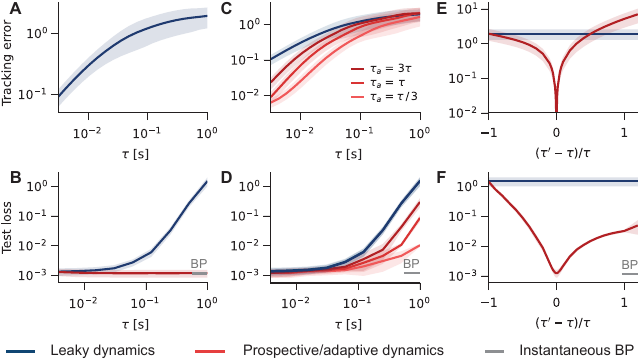}
    \captionof{figure}{\textbf{Better target trajectory tracking translates to greater learning performance.} Top row measures the empirical tracking error in the setup of Sections~\ref{sec:theory} and \ref{sec:physical_constraints}, bottom row the test loss in the teacher-student task of Section~\ref{subsec:learning-teacher-student}. Note that the two setups are similar but have some differences, cf. Methods. \textbf{A, B.} Leaky dynamics (\ref{eqn:continuous-time-BP}) require $\tau \rightarrow 0$ to perfectly track the reference trajectory and thus effectively learn (blue). Prospective dynamics (\ref{eqn:prospective_dynamics}) can do so for any $\tau$ (red), matching the performance of instantaneous backpropagation (grey). \textbf{C, D} Adaptive dynamics (\ref{eqn:adaptive}) track and learn better than leaky dynamics and the effect amplifies as $\tau_a$ gets smaller, relatively to $\tau$. \textbf{E, F.} Time constant mismatches as in (\ref{eqn:mismatch_tau}) hinder tracking and thus learning in the prospective dynamics. Yet, for a wide range of mismatches, the learning loss remains significantly lower that the one of leaky dynamics.}
    \label{fig:teacher_student}
\end{figure*}

\subsection{Time constant mismatches' impact on prospectiveness is minimal}
\label{subsec:time_constant_mismatch}

Besides the requirement for a differentiator circuit, which we discussed in the previous section, our theoretical analysis also relies on the assumption that all neurons have the same time constant $\tau$ and that it matches that of their prospective inputs. The first one is straightforward to relax, as our theory equally applies to this regime. In the following, we study the impact of relaxing the second assumption.

Let us consider a simple prospective network in which
\begin{equation}
    \tau \dot{u}_t = -u_t + W\rho(u_t) + \tau \mathrm{d}_t[W\rho(u_t)].
\end{equation}
We ignore teaching signals here for the sake of simplicity, but the following analysis naturally extends to that case.
In the previous equation, the prospective input can be computed pre-synaptically by leveraging the fact that $W \rho(u_t) + \mathrm{d}_t [ W \rho(u_t) ] = W ( \rho(u_t) + \tau \mathrm{d}_t \rho(u_t) )$, or post-synaptically, by receiving the input $W\rho(u_t)$ and then computing its prospective value.
We argue that the former is more sensitive to time constant mismatches than the latter. Indeed, if each neuron has a different prospective time constant, the different firing rates received by a neuron will have different underlying time constants. 
However, only two time constants are involved when prospective inputs are computed post-synaptically. 
We thus consider that prospective inputs are computed post-synaptically in the following manner and note that this is consistent with adaptive input currents.

Let us now analyze the impact of a time constant mismatch on the tracking properties of the prospective dynamics. Such a mismatch can occur in the adaptive dynamics if the constants in front of $f_\theta$ and $a_t$ do not match those prescribed by our theory. 
To keep the analysis simple, we assume that all neurons have a time constant equal to $\tau$ and that the time constant within the prospective input is $\tau'$. The network dynamics introduced above becomes
\begin{equation}
    \label{eqn:mismatch_tau}
    \tau \dot{u}_t = -u_t + W\rho(u_t) + \tau' \mathrm{d}_t [W \rho(u_t)].
\end{equation}
The deviation from the target trajectory can be expressed through a first-order approximation as
\begin{equation}
    \label{eqn:mismatch_taylor}
    u_t = u_t^\ast + (\tau' - \tau) {\rm d}_\tau u_t + O((\tau' - \tau)^2)
\end{equation}
and we are interested in finding an expression for ${\rm d}_\tau u_t$ to quantify the effect of the mismatch.
By injecting Eq.~\ref{eqn:mismatch_taylor} into Eq.~\ref{eqn:mismatch_tau}, if $\rho$ is close to being linear, we show in Appendix~\ref{app:physical_implementation} that the deviation ${\rm d}_\tau u_t$ satisfies the differential equation
\begin{equation}
    \label{eqn:mismatch_diffeq}
    \tau \dot{({\rm d}_\tau u_t)} = - {\rm d}_\tau u_t + \dot{u}_t^\ast. 
\end{equation}
It follows that the first-order deviation from the target in $\tau - \tau'$ grows as the target moves faster and that $\lVert u_t - u_t^\ast \rVert \leq |\tau - \tau'| \gamma + O\left((\tau-\tau')^2\right)$ at all time $t$, with $\gamma$ as in Theorem~\ref{thm:problem_flow}.
% Hence, . 
A consequence of this result is that faster inputs increase the sensitivity of the tracking to time constant mismatches, as they lead to faster $u^\ast$ and thus larger $\gamma$. We also note that this is a local result around $\tau'=\tau$. Larger deviations can qualitatively change the picture. For example, the dynamics can be contractive for $\tau' = \tau$, but explosive for some $\tau' > \tau$, meaning that the distance between $u_t^*$ and $u_t$ can no longer be uniformly bounded over time. Appendix~\ref{app:physical_implementation} provides a detailed example.

We simulate a time constant mismatch in Figure~\ref{fig:teacher_student}E and observe that prospective dynamics lead to better tracking abilities than leaky dynamics for a wide range of prospective time constants $\tau'$. Our theoretical analysis suggests that as $\tau$ gets smaller, the tracking error grows more slowly as a function of $(\tau'-\tau)/\tau$, so the negative peak we observe around $0$ becomes flatter. Fast neurons will thus be relatively less sensitive to a time constant mismatch. Additionally, the prospective and leaky dynamics match for $\tau'=0$, so we can reasonably expect prospective dynamics to exhibit better tracking for all $\tau' < \tau$ in general, similarly to what we observe in Figure~\ref{fig:teacher_student}.
\section{Learning with prospective neurons}
\label{sec:experiments}

Previous sections established the appealing theoretical properties of prospective neurons in terms of tracking, as well as their robustness to physical implementations. In this section, we demonstrate that these benefits translate to learning. We begin with the simple teacher-student setting introduced in Section~\ref{sec:theory}, using prospective neurons to enable teaching signal synchronization across a variety of learning rules. This controlled environment allows us to study how non-ideal prospective neurons affect teaching signal synchronization and learning performance. We then investigate how prospective neurons support reward-based learning in a control task and finally combine them with leaky neurons to successfully solve a reaching task that integrates both working memory and motor control.

\subsection{Prospective neurons support teaching signal synchronization in a large variety of learning rules}
\label{subsec:learning-teacher-student}

Our first series of experiments investigates how imprecise tracking impacts online learning performance in feedforward networks and applies prospective dynamics to more biologically plausible learning rules. We employ a teacher-student setup similar to our previous theoretical analysis, with one key difference: synaptic plasticity is now activated. We continue to use the continuous-time backpropagation of errors from (\ref{eqn:continuous-time-BP}) as our default learning rule, but with $W$ evolving over time according to the following dynamics:
\begin{equation}
    \label{eq:weight_update_eq_bp}
    \tau_W \dot{W}_t = -\delta_t \rho(u_t)^\top,
\end{equation}
and using approximate prospective dynamics (adaptive neurons or $\tau' \neq \tau$) whenever needed.
The network therefore learns online, without buffering any updates. As discussed in Section~\ref{subsec:synchro_as_tracking}, these parameter updates follow the gradient when the network is at equilibrium. To isolate the impact of individual components of the learning rule from sampling noise and reduce result variance, we use several samples in parallel (batch size of $50$). While this approach is not fully online, we relax this constraint in subsequent sections to train neural networks in a purely online manner.

\paragraph{Better tracking leads to better online learning performance.} 
Throughout this paper, we have argued that learning performance and  tracking the target trajectory of neural activities are intimately related through teaching signal synchronization. We now verify this relationship quantitatively.

We find that prospective dynamics achieve learning performance equivalent to the instantaneous backpropagation of errors algorithm, where activities and error signals are computed instantaneously at each time step, following Equations~\ref{eqn:BP-forward} and \ref{eqn:BP-backward}. This finding holds across a wide range of $\tau$ values, as shown in Figure~\ref{fig:teacher_student}B. Notably, even in challenging conditions where the sequence length is 5 times the value of $\tau$ -- preventing the dynamics from reaching perfect tracking -- prospective dynamics still achieve competitive performance.

Our results demonstrate that tracking precision strongly correlates with online learning performance. By examining the effects of leaky dynamics, adaptive dynamics, and time constant mismatch in prospective dynamics (Figure~\ref{fig:teacher_student}B, D, and F), we observe that in all cases, learning performance approaches that of instantaneous backpropagation as the tracking error converges to $0$. This confirms that the dependencies on data and network properties highlighted by our theory are qualitatively the same regarding learning performance.

\begin{table}[t]
    \centering
    \small
    \begin{tabular}{llll}
            \toprule
            \textbf{Connectivity} & \textbf{Method} & \textbf{Train loss} & \textbf{Test loss}\\
            \midrule
            Feedforward & Instantaneous BP & $3.94 \times 10^{-2}$ & $1.15 \times 10^{-3}$ \\
            \midrule
            Feedforward & Prospective BP & $4.16 \times 10^{-2}$ & $1.13 \times 10^{-3}$ \\
            Feedforward & Prospective DFA & $5.97 \times 10^{-1}$ & $1.32 \times 10^{-2}$ \\
            Feedforward & Prospective FA & $8.08 \times 10^{-1}$ & $1.75 \times 10^{-2}$ \\
            \midrule
            Feedforward & Leaky BP & $7.08 \times 10^1$ & $1.32 \times 10^1$\\
            \midrule
            \midrule
            Recurrent & Prospective RBP & $1.85 \times 10^{-3}$ & $8.34 \times 10^{-4}$ \\
            % \midrule
            Recurrent & Prospective hEP & $1.53 \times 10^{-2}$ & $1.10 \times 10^{-2}$\\
            \midrule
            Recurrent & Leaky RBP & $1.94 \times 10^{-2}$ & $1.30 \times 10^{-2}$ \\
            % \midrule
            Recurrent & Leaky hEP & $1.36 \times 10^1$ & $3.39 \times 10^1$ \\
            \bottomrule
        \end{tabular}
        \vspace{0.4cm}
        \caption{\textbf{Comparison of different learning algorithms on the teacher-student learning task of Section~\ref{subsec:learning-teacher-student}.} The first group of learning rules are used to train feedforward neural networks and the second one neural networks with recurrent connections relaxed to equilibrium. Prospective dynamics greatly improve learning performance, to the point of matching the performance of instantaneous backpropagation, which is explained by better tracking abilities. They additionally apply to a large variety of learning rules, such as (direct) feedback alignment or holomorphic equilibrium propagation. The former requires explicit error representation while the latter leverages oscillations to implicitly compute them. (R)BP stands for (recurrent) backpropagation, (D)FA for (direct) feedback alignment, and hEP for holomorphic equilibrium propagation.}
        \label{tab:learning_experiments}
\end{table}

\paragraph{Prospective dynamics applied to different learning rules.} To demonstrate the generality of the notion of teaching signal synchronization as tracking, we show that prospective dynamics enable effective online learning across various learning rules beyond the backpropagation method we have considered so far.

Backpropagation suffers from the weight transport problem \citep{grossberg_competitive_1987}: its feedback pathway is symmetric to the forward one, which is biologically implausible. Algorithms such as feedback alignment \citep{lillicrap_random_2016} and direct feedback alignment \citep{nokland_direct_2016} were developed to address this limitation by using distinct random feedback pathways to backpropagate errors. We reformulate these rules using continuous-time dynamics similar to (\ref{eqn:continuous-time-BP}) and present their performance in Table~\ref{tab:learning_experiments}. While they do not match the performance of backpropagation, they achieve significant results and outperform backpropagation with leaky dynamics, which uses symmetric weights.

The learning rules discussed so far require two key elements: neural activities and error signals. Several studies suggest that apical dendrites within biological neurons may encode these error terms \citep{sacramento_dendritic_2018, richards_dendritic_2019, mikulasch_where_2023}, although conclusive empirical evidence remains elusive. An alternative class of models uses variations in neural activity from external feedback to implicitly evaluate error signals \citep{ackley_learning_1985, baldi_contrastive_1991, movellan_contrastive_1991, scellier_equilibrium_2017}, typically requiring two distinct phases: one with a teaching signal and one without. To overcome this constraint, early work by \citet{baldi_contrastive_1991} proposed that teaching signal oscillations could enable single-phase learning without directly representing errors. In this approach, feedback remains continuously active with ongoing variations, eliminating the need for discrete phases. However, this introduces a new timescale for the oscillating signal, which must be sufficiently slow to allow neurons to track the target trajectory but faster than the input sequences to provide meaningful signals. Prospective dynamics solve this challenge by enabling neurons to effectively track feedback oscillations, regardless of their intrinsic timescales, potentially providing a key mechanism for contrastive learning in the brain. Using the holomorphic equilibrium propagation framework \cite[hEP]{laborieux_holomorphic_2022}, we demonstrate that prospective dynamics indeed support contrastive learning, with oscillation-based methods performing comparably to instantaneous backpropagation (see Table~\ref{tab:learning_experiments}), without requiring explicit error signal representation. Note that the model without a teaching signal still uses feedback connections and is therefore not a feedforward neural network. This kind of model is known as a deep equilibrium model in the machine learning literature \citep{bai_deep_2019} and needs to be trained with recurrent backpropagation \citep[][RBP]{almeida_backpropagation_1989, pineda_recurrent_1989}. Compared to (\ref{eqn:continuous-time-BP}), the learning and neural dynamics hardly change, as shown in Appendix~\ref{app:theoretical_derivations}.

Finally, we highlight that numerous additional learning rules requiring neural activity to be at equilibrium can benefit from prospective dynamics. These include predictive coding \citep{whittington_approximation_2017} (which yields the neuronal least-action principle of \cite{senn_neuronal_2024}) and control-based learning \citep{meulemans_least-control_2022}. 

\subsection{Prospective neurons support reward-based learning in control tasks}
\label{subsec:cartpole}

\begin{figure}[!t]
    \centering
    \includegraphics[scale=1.2]{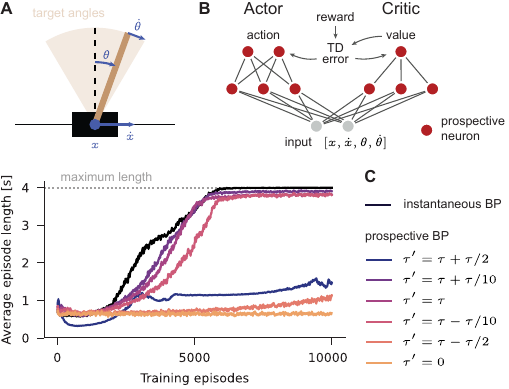}
    \caption{\textbf{Prospective dynamics support learning in a control task.} \textbf{A.} Visual depiction of the inverted-pendulum task, in which the goal is to balance the pole in an upright position. Reward rate is $1$ as long as the pole remains in the ``target angles'' region and that cart is not too far from the origin. \textbf{B.} The current state, comprising the position $x$ and velocity $\dot{x}$ of the cart, and the angle $\theta$ and angular velocity $\dot{\theta}$ of the pole, is fed to an actor-critic network. This network outputs an action (left or right) and an estimated value representing the expected discounted reward. At each time step, the estimated value is combined with the reward to compute the temporal difference (TD) error, which then serves as a teaching signal for both the actor and the critic. \textbf{C.} The prospective backpropagation algorithm solves the task -- maintaining the pole in the target region for $4$s -- as efficiently as its instantaneous counterpart. Gradually reducing the prospective component of dynamics ($\tau' \rightarrow 0$ with $\tau'$ as in Section~\ref{subsec:time_constant_mismatch}) diminishes learning capability, eventually preventing learning altogether. Increasing $\tau'$ has similar effects. Small deviations around $\tau' = \tau$ do not significantly affect performance, as seen with the $\tau' = \tau - \tau/10$ and $\tau'= \tau + \tau / 10$ curves. We use $\tau = 100\mathrm{ms}$ and report the rolling average of episode length using a $30$-episode window, averaged across $5$ seeds. Standard deviations (approximately $500$ms) are omitted to avoid visual clutter, with no significant differences observed between methods. See Methods section for implementation details.}
    \label{fig:cartpole}
\end{figure}

Having demonstrated the benefits of prospective neurons in simplified settings, we now examine their performance in a behaviorally relevant motor control task: the inverted-pendulum (also known as \texttt{Cartpole}, Figure~\ref{fig:cartpole}A). In this task, the agent must balance a pole in an upright position for as long as possible, given information about the current cart position, cart velocity, pole angle, and pole angular velocity. The agent receives a reward only when the pole remains within a designated target region.

While this task is typically straightforward for reinforcement learning algorithms in its standard form, we investigate a more challenging variant with three key modifications: First, we operate in the near continuous-time regime (simulation timestep $\Delta t = 1$ms), where estimating the expected discounted reward becomes notably difficult \citep{tallec_making_2019}. Second, we implement purely online learning without leveraging any offline replay or trajectory batching. Third, since the agent's actions influence future states, delays become even more detrimental than in our previous experiments. Given the Markovian nature of the environment, we implemented the agent as a memory-less feedforward neural network trained with an online continuous version of the advantage actor-critic algorithm \citep{doya_reinforcement_2000}. The algorithm and the choice of parameters are discussed in Appendix~\ref{app:additional_details_pendulum}.

Figure~\ref{fig:cartpole} illustrates the advantage of prospective dynamics (over leaky dynamics) in this task. Most notably, adaptive dynamics enable rapid error signal delivery: the prospective backpropagation algorithm ($\tau = 100$ms) performs comparably to its instantaneous counterpart.
We further investigate how the time-constant mismatch of the prospective component affects learning performance ($\tau' \neq \tau$). For this analysis, we simulate the dynamics described in Section~\ref{subsec:time_constant_mismatch}, varying $\tau'$ between $0\tau$ and $1.5\tau$. The learning trajectories for various $\tau'$ values in Figure~\ref{fig:cartpole}C demonstrate a clear pattern: as dynamics approach those of a leaky integrator ($\tau' = 0\tau$), performance deteriorates progressively until the agent completely fails to learn the task (occurring around $\tau' \approx \tau / 2$). This confirms that prospective dynamics are essential for successful task completion. We also note that the agent is still able to learn to solve the task almost perfectly under a relatively small time-constant mismatch on both sides ($\tau'=\tau + \tau/10$, $\tau'=\tau-\tau/10$). This is consistent with the teacher-student tracking example in Fig~\ref{fig:teacher_student}.

\subsection{Prospective neurons support the learning of memory-storing non-prospective neurons}
\label{subsec:rnn}

\begin{figure*}
    \centering
    \includegraphics[scale=1.2]{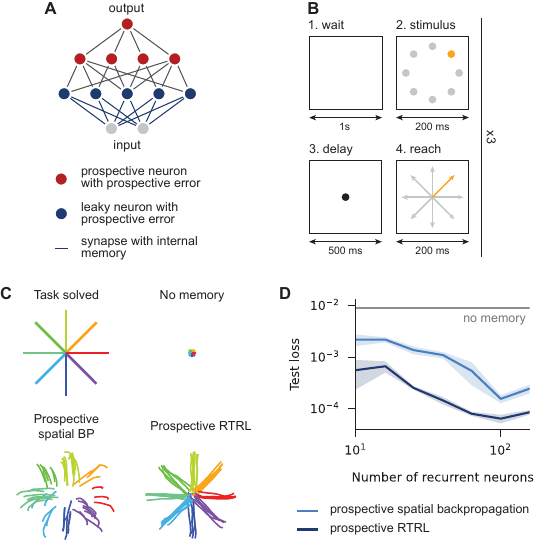}
    \caption{\textbf{Prospective neurons support learning of memory-storing non-prospective leaky neurons.} \textbf{A.} We consider a multi-layer neural network combining complex-valued leaky neurons that integrate information over time (first layer) with prospective neurons (in the subsequent two layers). All neurons have prospective errors, following the prospective version of the $\delta$ dynamics in (\ref{eqn:continuous-time-BP}). The first layer's role is to store memories of past inputs, while the subsequent layers provide non-linear and instantaneous processing of these memories. \textbf{B.} We train the network on a delayed reaching task requiring production of context-dependent target trajectories. \textbf{C.} The network successfully solves the task when trained online with prospective real-time recurrent learning. When memory-specific parameters are frozen (prospective spatial backpropagation), the network can only adjust existing movements but cannot refine them to match the desired behavior. When the leaky neurons are replaced by prospective ones, the network loses its memory capacity and produces no movement as its best response. \textbf{D.} Learning memory-specific parameters becomes increasingly critical when the number of non-prospective leaky neurons is limited. As this number approaches infinity, the pressure decreases since the feedforward network in the last two layers has a wider range of movements to select from to solve the task.}
    \label{fig:rnn}
\end{figure*}

Previous sections demonstrated that prospective neurons enable faster signal propagation through neural networks, resulting in improved learning performance. However, these neurons are inherently memory-less in the prospective limit and thus insufficient for solving complex spatio-temporal learning tasks. To address this limitation, we combine them with non-prospective leaky neurons that maintain a memory of past inputs, training the resulting hybrid network with a continuous time version of the online real-time recurrent learning algorithm \cite[RTRL]{williams_learning_1989}.

Specifically, we implement a network architecture consisting of one layer of complex-valued non-prospective linear leaky neurons followed by two feedforward layers of prospective neurons, as illustrated in Figure~\ref{fig:rnn}A. The non-prospective leaky neurons operate without recurrent connections between them, making exact online gradient calculation tractable \citep{zucchet_online_2023}. Using complex-valued neurons (instead of real-valued) ensures that the network remains expressive as a dense linear recurrent layer \cite{orvieto_resurrecting_2023}. Online gradient calculation requires maintaining one internal state per incoming synapse of the leaky neurons -- conceptually similar to an eligibility trace \citep{zenke_superspike_2018, bellec_solution_2020} -- and one per such neuron. We refer to these states as parameter sensitivities $s^{\theta}_t$, which mathematically correspond to the derivative $s_t^\theta := \mathrm{d}_\theta h_t$, where $h_t$ represents the state of the leaky neurons. These sensitivities can then be combined with error signals $\delta$ (similar to those in instantaneous spatial backpropagation of errors through the network hierarchy) to estimate the gradient. For example, the update for the time constants $\tau$ of the leaky neurons is given by:
\begin{equation}
    \tau_\tau \dot{\tau}_{t} = s_t^\tau \odot \delta_t
\end{equation}
where $\odot$ denotes the element-wise product. This weight update follows the gradient accurately when error signals $\delta_t$ arrive in sync. To ensure this synchronization, the use of prospective neurons and errors for the final two layers is critical, as well as having prospective errors associated with each leaky neuron. Further architectural details and learning rule derivations are provided in the Methods section.

\ifpreprint
    \vspace{-0.1cm}
\fi

Despite its relative simplicity, this network architecture can learn non-trivial tasks in a purely online manner. We demonstrate this capability using a delayed reaching task (Figure~\ref{fig:rnn}B) where the network receives a stimulus indicating a reaching direction, must wait for a ``Go" cue, and then move a virtual arm in the corresponding direction. Figure~\ref{fig:rnn}C illustrates how different methods perform on this task after processing $500$ sequences. With the learning rule described above, the network achieves near-perfect task performance. When memory is removed (e.g., by replacing leaky neurons with prospective ones), the network produces no movement after learning, as it cannot retain the directional information needed for the delayed response. Alternatively, when leaky neurons are present but their time constants remain fixed (prospective spatial BP; similar to reservoir computing \citep{lukosevicius_reservoir_2009}), the network learns to select and modify existing movement patterns but cannot reshape them to produce precisely targeted movements. This limitation becomes particularly pronounced in networks with fewer non-prospective neurons (Figure~\ref{fig:rnn}D), where the available variety of movement is more restricted.
Notably, with our chosen time constant ($\tau=100$ms), we found that networks using standard leaky dynamics (not shown in the figure) instead of prospective dynamics do not outperform the memory-less baseline -- highlighting that teaching signal synchronization is essential for effective spatio-temporal learning.

\ifpreprint
    \vspace{-0.1cm}
\fi

We conclude by noting that this kind of architecture, while exhibiting rudimentary recurrence patterns, has recently demonstrated remarkable power for sequence learning when stacked hierarchically \citep{gu_efficiently_2022, orvieto_resurrecting_2023}. The online learning rule we explore naturally extends to these deeper networks \citep{zucchet_online_2023}, albeit with minor approximations to the gradient. We thus expect it to be able to tackle more challenging tasks.
\section{Discussion}

\paragraph{Connections to prospective dynamics.} Dynamics similar to the prospective dynamics studied in this paper have been examined in various fields, such as neuroscience, optimization theory, and statistical estimation. Physiological evidence \citep{ulrich_dendritic_2002} has highlighted the ability of biological neurons to phase-advance their output. Drawing from these findings, \citet{mi_spike_2014} studied spike frequency adaptation as a potential mechanism for tracking and phase-advancing traveling waves in attractor networks. The neuronal least action \citep{senn_neuronal_2024} and latent equilibrium \citep{haider_latent_2021} theories propose that this prospective ability may enable neural activity to respond instantaneously to external stimuli. \cite{ellenberger_backpropagation_2024} explore how time constant mismatches of the type we consider in Section~\ref{subsec:time_constant_mismatch} could support temporal credit assignment. The dynamics we study are a generalization of those of the neuronal least action theory beyond the energy-based systems it considers. Our dynamics are also a generalization of the prediction-correction algorithm \citep{zhao_novel_1998, simonetto_prediction-correction_2017} in time-varying optimization \citep{simonetto_time-varying_2020}. This problem arises in many fields, such as adaptive control \citep{landau_adaptive_2011} and online learning \citep{zinkevich_online_2003}, and it consists of finding a trajectory of parameters that minimizes a time-varying cost function. The prediction-correction algorithm consists of two parts: the prediction part of the dynamics, which aims to keep the tracking error constant, and a correction part that pushes the current parameters towards the current minimizer. The mathematical details of these two connections are provided in the Methods section. Finally, our prospective dynamics have interesting connections to control and estimation theory. When considering the input received by a neuron as an error term, the prospective input is akin to a proportional-derivative controller (PD) \citep{minorsky_directional_1922} that steers neural activity. The prediction-correction view is also reminiscent of Kalman filtering \citep{kalman_new_1960}, in which the current state estimate is updated using a model of the true dynamics of the system and corrected using the current observation. Instead of being explicitly defined through a differential equation, as in the Kalman filter, the target is implicitly defined as an equilibrium point here.

\paragraph{Mechanisms to deal with the delays inherent to neural computation.} Prospective neurons add to the list of mechanisms that can reduce delays in biological neural networks. At the population level, \cite{knight_dynamics_1972} and \cite{van_vreeswijk_chaos_1996, van_vreeswijk_chaotic_1998} have shown that groups of neurons can respond to external inputs faster than the characteristic time constant of individual neurons. When it comes to learning, this implies that teaching signals could arrive in sync at the population level, which is not precise enough to support the learning of hierarchical networks. At the neuronal level, eligibility traces \citep{gerstner_eligibility_2018} have been argued to support temporal credit assignment \citep{zenke_superspike_2018, bellec_solution_2020}. Intuitively, the role of these traces is to apply a similar delay to the pre-synaptic term in the learning rule as that of the post-synaptic term, which is the error signal in our case. This can sometimes recover the true gradient signal for neurons that are not recurrently connected \citep[e.g.,][]{mozer_focused_1989}, assuming the downstream processing is instantaneous. However, the deeper the downstream network, the harder it becomes to mimic the delay it induces on the error with a simple fading memory. As a consequence, there will be a temporal mismatch between the pre- and post-synaptic terms of the learning process, which will be particularly pronounced for remote layers, and learning will be significantly affected. Prospective neurons constitute a unique mechanism in their ability to enable the precise teaching signals required for learning deep networks.

\paragraph{Neural implementation of adaptive dynamics.} We have thoroughly discussed the theoretical properties of prospective neurons in Section~\ref{sec:theory} and have shown that adaptive dynamics constitute a more realistic implementation while maintaining most of the important prospective properties. Critical to the theory of Section~\ref{sec:physical_constraints} and its experimental counterpart is the hypothesis that the adaptation time constant $\tau_a$ is faster than the membrane time constant $\tau$. While this is outside the regime in which adaptive neurons traditionally lie \citep{brette_adaptive_2005}, we argue that mechanisms that could support fast adaptation exist. For instance, it might be caused by the fast inactivation of sodium currents involved in generating action potentials. In this case, the estimate of the input temporal derivative is interpreted as a sodium current. It has an ultra-fast activation in the range of $1\,$ms and a fast inactivation in the range of $10\,$ms (respectively, the gating variables $m$ and $h$ in the Hodgkin-Huxley model \citep{hodgkin_quantitative_1952,gerstner_neuronal_2014}). Although the sodium current involves a product of these voltage-dependent gating variables, their overall effect can be approximated as the sum of an ultra-fast voltage-dependent excitation ($f$) and fast inhibition ($a$) that, together, cause a prospective voltage drive \citep{brandt_prospective_2024}.
\section*{Methods}

\paragraph{Link with neuronal least action (NLA) principle \citep{senn_neuronal_2024} latent equilibrium (LE) \citep{haider_latent_2021} theories.} These two theories both study how prospective neurons theoretically enable instantaneous information propagation in energy-based models that rely on prediction errors. The NLA derives the dynamics from a least action principle, while the LE minimizes energy. The difference between these theories relies on which quantities are prospective: input currents/firing rates in the NLA and voltages in the LE. Our theory is a generalization of the NLA to more general dynamical systems; we recover the NLA by having $f_\theta(s, t) = s - \nabla_s E_\theta(s, t)$ with $E$ an energy function. This has important consequences: First, on a conceptual level, it helps decouple the role of prospective dynamics from other modeling choices. Second, it simplifies the mathematical analysis of the dynamics, especially in non-ideal settings. Third, and perhaps more importantly, it enables the decoupling of neural activity from error signals. In particular, activity can be non-prospective when errors are prospective. This is crucial for combining prospective neurons with the RTRL rule we use in Section~\ref{subsec:rnn}.

\paragraph{Link with the prediction-correction algorithm.}  The prediction-correction algorithms aim to solve a time-varying optimization problem $\min_s E(s, t)$ for all $t$, using the solutions obtained from previous timesteps. Assuming $f(s, t)$ to be $s - \nabla_s E(s, t)$, the prediction-correction algorithm \citep{simonetto_prediction-correction_2017} minimizes the cost $E$ through
\begin{equation}
    \label{eqn:prediction_correction}
    \tau \dot{s}_t = \left (\mathrm{Id} - \frac{\partial f}{\partial s}(s_t, t) \right )^{-1} \left ( -s_t + f(s_t, t) + \tau \frac{\partial f}{\partial t}(s_t, t)\right )
\end{equation}
In this setting, $s$ does not represent any neural activity but rather the parameters of the system that are to-be-optimized. While different from the prospective dynamics at first glance, both dynamics are the same. To make this link, we can rewrite the prospective dynamics of Eq.~\ref{eqn:prospective_dynamics} as $\mathrm{d}_t \left [ s_t - f_\theta(s_t, t) \right ] = - \left [ s_t - f_\theta(s_t, t) \right ]$. Then, we apply the chain rule to the left-hand side term of the previous equation and invert the obtained Jacobian to get the desired result. The correction part of~(\ref{eqn:prediction_correction}), 
\begin{equation}
    \tau \dot{s} = \left ( \mathrm{Id} - \partial_s f(s_t, t) \right )^{-1} (-s_t+ f(s_t, t)),
\end{equation}
is the continuous-time version of Newton's algorithm for finding a solution of $s - f(s, t)$ as if $t$ is fixed. On the other side, the prediction part, $\tau\dot{s}_t = \tau(\mathrm{Id} - \partial_s f(s_t, t))^{-1} \partial_t f(s_t, t)$, aims to keep the error $s_t- f(s_t, t)$ constant and arises from the implicit function theorem \citep{dontchev_implicit_2009}. The prospective dynamics we study can thus also be understood from this perspective.

\paragraph{Simulation of prospective dynamics.} Several challenges arise when simulating prospective dynamics on digital computers. On one side, numerical integration of the prospective dynamics (\ref{eqn:prospective_dynamics}) requires schemes suited for implicit differential equations, as both left- and right-hand sides depend on $\dot{s}$. Such schemes do not benefit from the extensive theoretical understanding that ordinary differential equations do. On the other side, simulating the equivalent prediction-correction dynamics (\ref{eqn:prediction_correction}) requires inverting the Jacobian matrix $(\mathrm{Id} - \partial_s f(s_t, t))$ at every time step, which is notoriously expensive. We found that the following Euler-like implicit integration scheme works well for our purposes: we update $s$ by approximating $\dot{s}_t$ as $\Delta t^{-1} [s_{t+\Delta t} - s_t]$ and approximate $\mathrm{d}_t [f_\theta(s_t, t)]$ as $\Delta t^{-1} [f_\theta(s_t, t) - f_\theta(s_{t-\Delta t}, t-\Delta t)]$, with $\Delta t$ being the step size of the integration scheme. That is,
\begin{equation}
    s_{t+\Delta t} = s_{t} + \frac{\Delta t}{\tau} ( -s_t + f_\theta(s_t, t) ) + f_\theta(s_t, t) - f_\theta(s_{t-\Delta t}).
\end{equation}
While there exists a flourishing literature studying the theoretical properties of numerical schemes for differential algebraic equations \citep[e.g.][]{marz_numerical_1992}, of which the prospective dynamics is an example, we are not aware of any theoretical guaranties for this specific integration scheme. Yet, we found that it performed similarly to the Euler integration of the prediction-correction dynamics, while ours has a much lower computational footprint.

\paragraph{Experimental setup for Sections~\ref{sec:theory} and \ref{sec:physical_constraints}.}
We use a feedforward neural network with 2 hidden layers of size $5 - 20 - 20 - 1$ and a sigmoid non-linearity. The weights of the network are drawn from a normal distribution with variance $1 / n_\mathrm{out}$ and $n_\mathrm{out}$ representing the number of output neurons of a given layer. The teacher network has the same size, but its weights are drawn from a distribution with a standard deviation 3 times larger. Each of the 5 features of the input $x$ to the network is a random linear combination of $1000$ sine waves with phases $\varphi_i$ uniformly sampled from $[0, 2\pi]$ and angular velocities $\omega_i$ from $[\omega_0, 2\omega_0]$. More precisely, we have
\begin{equation}
    (x_t)_i = \sum_{j=1}^{1000} M_{ij} (s_{t})_j ~~ \mathrm{with} ~ (s_{t})_j = \sin(\omega_j t + \varphi_j) ~ \text{and} ~ M_{ij} \sim \mathcal{N}\left(0, \frac{1}{\sqrt{d_\mathrm{input}}} \right )
\end{equation}
The target output $y_t$ used in the dynamics (\ref{eqn:continuous-time-BP}) is the instantaneous processing of $x_t$ by the teacher network.

In Figure~\ref{fig:theory}A, we use the Euler integration scheme with a step size of $\Delta t = 10$ms for 1000 sequences of length $T=100s$. The different weights are fixed throughout this experiment; we only vary $\tau$ (its logarithm is drawn uniformly from $[\log 0.05, \log 1]$) and $\omega_0$ (its logarithm is drawn uniformly from $[\log 0.01, \log 0.2]$). We additionally resample all the sine waves for each sequence. The metric we report is the $\max_{t \geq 25s} \lVert s_t - f_\theta(s_t, t) \rVert$.

In Figure~\ref{fig:theory}B, we use the integration scheme detailed above with a step size of $0.05\,$ms. We take $\omega_0$ to be $2\pi / 10$. The metric we report is $\lVert s_t - f_\theta(s_t, t) \rVert$.

\paragraph{Experimental setup for Section~\ref{subsec:learning-teacher-student}.} The experimental setup is almost the same as the one for the tracking measurements described above. There are a few additional changes: we use a student of size $10 - 100 - 100 - 5$ and a teacher of size $10 - 20 - 20 - 5$, both with ReLU non-linearities. The input combines $100$ different sine waves with a characteristic angular velocity of $\omega_0 = 2\pi / 10$, and the step size $\Delta t$ is equal to $1$ ms. By default, $\tau$ is taken to be equal to $1s$, and each sequence is of length $5s$. Note that we use shorter time sequences than in the tracking experiments here to reduce temporal correlations in the data and thus make gradient-based learning easier. We train the networks over $100$ epochs, using the Adam optimizer with a learning rate $\Delta t / \tau_W$ and a cosine scheduler. We test the learned network on a held-out set of $5000$ sequences at the end of learning, keeping the same neural dynamics as those during learning. The network weights are updated after each time step. The learning rate is tuned independently for each method using a grid search over $[0.001, 0.003, 0.01, 0.03, 0.1, 0.3, 1]$ and 2 seeds. For the plots with varying parameters (Figure~\ref{fig:teacher_student}B, D and E), we pick the learning rate yielding the best average test loss.

For the recurrent BP and holomorphic EP learning experiments (Figure~\ref{tab:learning_experiments}), we used the same setup as above, with the following differences: The architecture of the student network is a continuous Hopfield network with a shifted sigmoid $z \mapsto 1/(1+e^{-4z+2})$ instead of ReLU. 
The teacher network is still a feed-forward network using the sigmoid activation instead of ReLU for faster target computation.
Both the teacher and student networks have the same dimensions as in the instantaneous BP experiment.
For holomorphic EP, the complex teaching oscillations $\beta = |\beta|e^{{\rm i}\omega_\beta t}$ have an amplitude of $|\beta|=0.2$, and a pulsation of $\omega_\beta = 190\pi$ rad/s. The error is decoded from the online neuronal oscillations using an exponential moving average filter on a timescale of $\tau_{\rm ema} = 31.5$ms. The best parameters for $|\beta|$, $\omega_\beta$, and $\tau_{\rm ema}$ are found through grid search.
The training lasts for $200$ epochs. 
For more details on the theory of holomorphic EP, see Appendix~\ref{app:hep}.

\paragraph{Experimental setup for Section~\ref{subsec:cartpole}.} The simulation of the inverted pendulum is done with Euler integration ($\Delta t = 0.001$s). At each timestep, the agent can take one of two actions that exert a force either to the left or to the right. An episode ends if either the maximum timestep is reached ($T=4$s) or if the pole falls outside an admissible range ($-2.4 \leq x \leq 2.4, -12^\circ \leq \theta \leq 12^\circ$). We note that since prospective tracking takes time to react to discontinuity, we restrict the agent to receive a reward of $1$ only when the pole is balancing within a smaller range ($-2 \leq x \leq 2, -8.5^\circ \leq \theta \leq 8.5^\circ$) than the true admissible range and receives a reward of 0 otherwise. This change was critical for the value function estimator and its gradient to converge and learn the end of the episode properly. We parameterize the actor and critic with one layer neural networks consisting of $64$ hidden neurons and ReLU activation. We train the two networks for $10000$ episodes with a continuous advantage actor-critic (A2C) loss \citep{doya_reinforcement_2000, mnih_asynchronous_2016}. The networks are optimized with stochastic gradient descent with early stopping, where the gradient is calculated with prospective backpropagation dynamics. The learning rate, $\Delta t / \tau_W$, is tuned for prospective BP for each $\tau'$s in the range [$10^{-7}$, $10^{-6}$, $10^{-5}$, $10^{-4}$, $10^{-3}$]. We note that within the hyperparameter setups where training is effective, the learning rate did not affect the speed of learning in prospective BP with different $\tau'$. A detailed description of the full online prospective A2C training algorithm is provided in the Appendix~\ref{app:additional_details_pendulum}. 

\paragraph{Learning rule for the network of Section~\ref{subsec:rnn}.} In this series of experiments, we study a network comprising one layer of complex-valued leaky neurons, followed by a one hidden-layer feedforward neural network. The latter is exactly the same as what we use in the rest of the paper, including how error signals evolve, and it follows the dynamics of the prospective backpropagation algorithm. We will thus assume that it is always tracking the target trajectory. The complex-valued leaky neurons followed the dynamics:
\begin{equation}
    \tau \dot{u}_t = -(1 + i \omega)u_t + Wx_t.
\end{equation}
Note that here $u$ and $W$ are complex-valued terms. One can think of $u$ as a linearized version of a complex neuron, e.g. \citep{izhikevich_resonate-and-fire_2001, schaffer_complex-valued_2013}. We do not enter further biophysical implementation details and focus on the learning dynamics instead.

We can now compute the loss of the gradient with respect to some of the parameters $\theta$ of the leaky neuron
\begin{equation}
    \frac{\mathrm{d}L}{\mathrm{d}\theta} = \int \frac{\mathrm{d}L_t}{\mathrm{d}\theta}\mathrm{d}t = \int \frac{\partial L_t}{\partial u_t} \frac{\mathrm{d} u_t}{\mathrm{d}\theta} \mathrm{d}t
\end{equation}
where we used that $u_t$ only directly affects $L_t$ in the second equation. The term $\partial_{u_t} L_t$ is exactly the $\delta_t$, up to a transpose, of (spatial) backpropagation, so it can be made accessible by having a prospective error ``neuron" associated with each leaky neuron. We are left with estimating $\mathrm{d}_\theta u_t$. As those leaky neurons are independent of each other, the parameters of one do not affect the state of the others, so we can ignore all the $\mathrm{d}_{\theta_i} u_{t,j}$ for $i \neq j$; with a slight abuse of notation, we get rid of the identically zero terms in $\mathrm{d}_\theta u_t$ in the following. Importantly, it implies that the non-zero terms of $\mathrm{d}_\theta u_t$ are exactly the size of $\theta$, so we can store this extra state in the neuron for $\tau$ and $\omega$, and in the synapse for $W$. We have
\begin{align}
    \tau \dot{\frac{\mathrm{d} u_t}{\mathrm{d}\tau}} &= (-\mathbf{1} + i \omega) \frac{\mathrm{d} u_t}{\mathrm{d}\tau} - \tau^{-1} \left ( (-\mathbf{1} + i\omega) u_t + Wx_t \right) \\
    \tau \dot{\frac{\mathrm{d} u_t}{\mathrm{d}\omega}} &= (-\mathbf{1} + i \omega) \frac{\mathrm{d} u_t}{\mathrm{d}\omega} + i u_t\\
    \tau \dot{\frac{\mathrm{d} u_t}{\mathrm{d} W}} &= \mathrm{diag} [-1 + i \omega]  \odot  \frac{\mathrm{d} u_t}{\mathrm{d}W} + x_t \mathbf{1}^\top
\end{align}
In the last equation, we used $\mathbf{1}$ to denote a vector of ones of appropriate size. Finally, updating those parameters according to
\begin{align}
    \tau_\tau \dot{\tau}_t &= -\frac{\mathrm{d}u_t}{\mathrm{d} \tau} \odot \delta_t\\
    \tau_\omega \dot{\omega}_t &= -\frac{\mathrm{d}u_t}{\mathrm{d} \omega} \odot \delta_t\\
    \tau_W \dot{W}_{t, ij} &= -\frac{\mathrm{d}u_t}{\mathrm{d} W_{ij}} \delta_{t, j}
\end{align}
will follow the instantaneous gradient. Note that if parameter updates are buffered and not applied, the estimated gradient for all parameters will exactly match the gradient, that is typically computed with backpropagation-through-time. In short, we show that if each neuron and synapse has an additional hidden state that evolves with the equations above, exact online gradient calculation is possible for the kind of neural networks we study here.  In our experiments, it will not be entirely true as plasticity is always on and as we initialize neural activity at a default state that is not yet on the target trajectory.

\paragraph{Experimental setup for Section~\ref{subsec:rnn}} Each sequence of the delayed reach task consists of 3 subsequences of the form ``Wait", ``Stimulus", ``Delay" and ``Reach" which last respectively $1$s, $200$ms, $500$ms and $200$ms. A sequence is thus $5.7$s. We use $\Delta t = 1$ms. The input given to the network has $9$ dimensions, $8$ for the one-hot encoded version of each stimulus, and $1$ for the ``Go" cue in the ``Reach" phase. The target movement, which is 2-dimensional, is a line from the center to a location encoded by the stimuli. We use the mean squared distance to the target movement as the loss to train the network. The network consists of one layer of leaky neurons and two layers of prospective neurons. We fix the hidden layer of the multi-layer perceptron to have $100$ neurons and vary the number of leaky neurons from $10$ to $158$ in Figure~\ref{fig:rnn}. The time constants of the prospective neurons ($u$ and $\delta$) are fixed at $100$ms. The logs of the time constants of the leaky neurons are randomly initialized uniformly from $[\log 100$ms, $\log 10$s$]$. In the prospective RTRL algorithm, the multilayer perceptron is trained using the prospective backpropagation algorithm, and the leaky neurons are trained with the learning rule described above. We train all networks on $500$ sequences without any replay, which more or less corresponds to all possible stimulus combinations, using the Adam optimizer and a cosine learning rate scheduler with an initial learning rate of $0.01$, obtained after a grid search on both methods. Note that it corresponds to having a characteristic time constant of $10$s for all learned parameters. We then measure the test loss on 500 sequences that may have been seen during training. The goal of this loss is not to test generalization but rather the optimization abilities of our learning rule.

\ifpreprint
\section*{Acknowledgments}

We thank Jean-Pascal Pfister, Frederico Benitez, and Alexander Meulemans for early discussions on the project, as well as Seijin Kobayashi and Asier Mujika for their advice on reinforcement learning methods. This research was supported by the Swiss National Science Foundation (grant numbers PZ00P3\_186027, PCEFP3\_202981 and TMPFP3\_210282), an ETH Research Grant (ETH-23 21-1), EU’s Horizon Europe Research and Innovation Program (CONVOLVE, grant agreement number 101070374) funded through SERI (ref 1131-52302), and the Novartis Research Foundation. 
\fi

\bibliography{references}

@article{schaffer_complex-valued_2013,
	title = {A complex-valued firing-rate model that approximates the dynamics of spiking networks},
	volume = {9},
	abstract = {Firing-rate models provide an attractive approach for studying large neural networks because they can be simulated rapidly and are amenable to mathematical analysis. Traditional firing-rate models assume a simple form in which the dynamics are governed by a single time constant. These models fail to replicate certain dynamic features of populations of spiking neurons, especially those involving synchronization. We present a complex-valued firing-rate model derived from an eigenfunction expansion of the Fokker-Planck equation and apply it to the linear, quadratic and exponential integrate-andfire models. Despite being almost as simple as a traditional firing-rate description, this model can reproduce firing-rate dynamics due to partial synchronization of the action potentials in a spiking model, and it successfully predicts the transition to spike synchronization in networks of coupled excitatory and inhibitory neurons.},
	number = {10},
	journal = {PLoS Computational Biology},
	author = {Schaffer, Evan S. and Ostojic, Srdjan and Abbott, L. F.},
	editor = {Ermentrout, Bard},
	year = {2013},
}

@article{izhikevich_resonate-and-fire_2001,
	title = {Resonate-and-fire neurons},
	volume = {14},
	number = {6-7},
	journal = {Neural Networks},
	author = {Izhikevich, Eugene M},
	year = {2001},
}

@article{abbott_lapicques_1999,
	title = {Lapicque’s introduction of the integrate-and-fire model neuron (1907)},
	volume = {50},
	number = {5-6},
	journal = {Brain Research Bulletin},
	author = {Abbott, Larry F},
	year = {1999},
}

@article{mozer_focused_1989,
	title = {A focused backpropagation algorithm for temporal pattern recognition},
	volume = {3},
	abstract = {Time is at th e heart of many pat tern recognition t asks, e.g., speech recognition . However, connectionis t learning algorit hms to date are not well suited for dealing with tim e-varying in put patterns. This paper introduces a sp ecialized connectionist architect ure and corre sponding spe cialization of t he backprop agation learnin g algori thm th at ope ra tes efficiently on temporal sequences . The key feature of t he ar chit ecture is a layer of self-connecte d hid den units that integrate their curre nt value wit h th e new input at each time ste p to construct a static represent ation of the t emporal input sequence . Thi s architecture avoids two deficiencies found in other mod els of sequence recognition: first , it reduces t he difficulty of temporal credit assignm ent by focusing th e ba ckpropagat ed err or signal; second, it eliminates t he need for a buffer t o hold th e input sequence and/or in termediat e activity levels. The lat t er prop erty is due to the fact th at during th e forward (activation) phase, incremental activity tra ces can be lo cally compute d that hold all information necessar y for backpropagation in time . It is argued t ha t thi s architecture should scale better t ha n conventional recurrent architectures wit h respect to sequenc e length . The architecture has been used to implement a t emporal version of Rumelhart and McClelland's verb past-tense model [1]. The hidden units learn to beh ave something like Rumelhart and McClelland 's "Wickelphones," a rich and flexibl e repr esent ation of temporal information.},
	journal = {Complex Systems},
	author = {Mozer, Michael C},
	year = {1989},
}

@article{wolpert_internal_1998,
	title = {Internal models in the cerebellum},
	volume = {2},
	number = {9},
	journal = {Trends in cognitive sciences},
	author = {Wolpert, Daniel M and Miall, R Chris and Kawato, Mitsuo},
	year = {1998},
}

@article{gilra_predicting_2017,
	title = {Predicting non-linear dynamics by stable local learning in a recurrent spiking neural network},
	volume = {6},
	journal = {eLife},
	author = {Gilra, Aditya and Gerstner, Wulfram},
	year = {2017},
}

@article{miall_forward_1996,
	title = {Forward models for physiological motor control},
	volume = {9},
	number = {8},
	journal = {Neural Networks},
	author = {Miall, R.C. and Wolpert, D.M.},
	year = {1996},
}

@article{ellenberger_backpropagation_2024,
	title = {Backpropagation through space, time, and the brain},
	abstract = {How physical networks of neurons, bound by spatio-temporal locality constraints, can perform efficient credit assignment, remains, to a large extent, an open question. In machine learning, the answer is almost universally given by the error backpropagation algorithm, through both space and time. However, this algorithm is well-known to rely on biologically implausible assumptions, in particular with respect to spatiotemporal locality. Alternative forward-propagation models such as real-time recurrent learning only partially solve the locality problem, but only at the cost of scaling, due to prohibitive storage requirements. We introduce Generalized Latent Equilibrium (GLE), a computational framework for fully local spatio-temporal credit assignment in physical, dynamical networks of neurons. We start by defining an energy based on neuron-local mismatches, from which we derive both neuronal dynamics via stationarity and parameter dynamics via gradient descent. The resulting dynamics can be interpreted as a real-time, biologically plausible approximation of backpropagation through space and time in deep cortical networks with continuous-time neuronal dynamics and continuously active, local synaptic plasticity. In particular, GLE exploits the morphology of dendritic trees to enable more complex information storage and processing in single neurons, as well as the ability of biological neurons to phase-shift their output rate with respect to their membrane potential, which is essential in both directions of information propagation. For the forward computation, it enables the mapping of time-continuous inputs to neuronal space, effectively performing a spatio-temporal convolution. For the backward computation, it permits the temporal inversion of feedback signals, which consequently approximate the adjoint variables necessary for useful parameter updates.},
	journal = {arXiv arXiv:2403.16933},
	author = {Ellenberger, Benjamin and Haider, Paul and Jordan, Jakob and Max, Kevin and Jaras, Ismael and Kriener, Laura and Benitez, Federico and Petrovici, Mihai A.},
	year = {2024},
	keywords = {Computer Science - Artificial Intelligence, Computer Science - Machine Learning, Computer Science - Neural and Evolutionary Computing, Electrical Engineering and Systems Science - Signal Processing, Quantitative Biology - Neurons and Cognition},
}

@inproceedings{tallec_making_2019,
	title = {Making deep {Q}-learning methods robust to time discretization},
	booktitle = {International {Conference} on {Machine} {Learning}},
	author = {Tallec, Corentin and Blier, Léonard and Ollivier, Yann},
	year = {2019},
}

@article{marz_numerical_1992,
	title = {Numerical methods for differential algebraic equations},
	volume = {1},
	journal = {Acta Numerica},
	author = {März, Roswitha},
	year = {1992},
}

@inproceedings{mnih_asynchronous_2016,
	title = {Asynchronous methods for deep reinforcement learning},
	booktitle = {International {Conference} on {Machine} {Learning}},
	author = {Mnih, Volodymyr and Badia, Adria Puigdomenech and Mirza, Mehdi and Graves, Alex and Lillicrap, Timothy and Harley, Tim and Silver, David and Kavukcuoglu, Koray},
	year = {2016},
}

@article{grossberg_competitive_1987,
	title = {Competitive learning: {From} interactive activation to adaptive resonance},
	volume = {11},
	number = {1},
	journal = {Cognitive Science},
	author = {Grossberg, Stephen},
	year = {1987},
}

@article{brette_adaptive_2005,
	title = {Adaptive exponential integrate-and-fire model as an effective description of neuronal activity},
	volume = {94},
	number = {5},
	journal = {Journal of Neurophysiology},
	author = {Brette, Romain and Gerstner, Wulfram},
	year = {2005},
}

@inproceedings{almeida_backpropagation_1989,
	title = {Backpropagation in perceptrons with feedback},
	abstract = {Backpropagation has shown to be an efficient learning rule for graded perceptrons. However, as initially introduced, it was limited to feedforward structures. Extension of backpropagation to systems with feedback was done by this author, in [4]. In this paper, this extension is presented, and the error propagation circuit is interpreted as the transpose of the linearized perceptron network. The error propagation network is shown to always be stable during training, and a sufficient condition for the stability of the perceptron network is derived. Finally, potentially useful relationships with Hopfield networks and Boltzmann machines are discussed.},
	booktitle = {Neural {Computers}},
	publisher = {Springer Berlin Heidelberg},
	author = {Almeida, Luís B.},
	editor = {Eckmiller, Rolf and v.d. Malsburg, Christoph},
	year = {1989},
}

@article{doya_reinforcement_2000,
	title = {Reinforcement learning in continuous time and space},
	volume = {12},
	abstract = {This paper presents a reinforcement learning framework for continuoustime dynamical systems without a priori discretization of time, state, and action. Based on the Hamilton-Jacobi-Bellman (HJB) equation for in nitehorizon, discounted reward problems, we derive algorithms for estimating value functions and for improving policies with the use of function approximators. The process of value function estimation is formulated as the minimization of a continuous-time form of the temporal di erence (TD) error. Update methods based on backward Euler approximation and exponential eligibility traces are derived and their correspondences with the conventional residual gradient, TD(0), and TD( ) algorithms are shown. For policy improvement, two methods, namely, a continuous actor-critic method and a value-gradient based greedy policy, are formulated. As a special case of the latter, a nonlinear feedback control law using the value gradient and the model of the input gain is derived. The {\textbackslash}textbackslashadvantage updating", a model-free algorithm derived previously, is also formulated in the HJB based framework.},
	number = {1},
	journal = {Neural Computation},
	author = {Doya, Kenji},
	year = {2000},
	keywords = {continuous rl, notion},
}

@inproceedings{mi_spike_2014,
	title = {Spike frequency adaptation implements anticipative tracking in continuous attractor neural networks},
	abstract = {To extract motion information, the brain needs to compensate for time delays that are ubiquitous in neural signal transmission and processing. Here we propose a simple yet effective mechanism to implement anticipative tracking in neural systems. The proposed mechanism utilizes the property of spike-frequency adaptation (SFA), a feature widely observed in neuronal responses. We employ continuous attractor neural networks (CANNs) as the model to describe the tracking behaviors in neural systems. Incorporating SFA, a CANN exhibits intrinsic mobility, manifested by the ability of the CANN to hold self-sustained travelling waves. In tracking a moving stimulus, the interplay between the external drive and the intrinsic mobility of the network determines the tracking performance. Interestingly, we find that the regime of anticipation effectively coincides with the regime where the intrinsic speed of the travelling wave exceeds that of the external drive. Depending on the SFA amplitudes, the network can achieve either perfect tracking, with zero-lag to the input, or perfect anticipative tracking, with a constant leading time to the input. Our model successfully reproduces experimentally observed anticipative tracking behaviors, and sheds light on our understanding of how the brain processes motion information in a timely manner.},
	booktitle = {Advances in {Neural} {Information} {Processing} {Systems}},
	author = {Mi, Yuanyuan and Fung, C. C. Alan and Wong, K. Y. Michael and Wu, Si},
	year = {2014},
	keywords = {attractor networks, head direction, notion},
}

@article{hodgkin_quantitative_1952,
	title = {A quantitative description of membrane current and its application to conduction and excitation in nerve},
	volume = {117},
	number = {4},
	journal = {The Journal of Physiology},
	author = {Hodgkin, Alan L and Huxley, Andrew F},
	year = {1952},
}

@article{pineda_recurrent_1989,
	title = {Recurrent backpropagation and the dynamical approach to adaptive neural computation},
	volume = {1},
	abstract = {Error backpropagation in feedforward neural network models is a popular learning algorithm that has its roots in nonlinear estimation and optimization. It is being used routinely to calculate error gradients in nonlinear systems with hundreds of thousands of parameters. However, the classical architecture for backpropagation has severe restrictions. The extension of backpropagation to networks with recurrent connections will be reviewed. It is now possible to efficiently compute the error gradients for networks that have temporal dynamics, which opens applications to a host of problems in systems identification and control.},
	number = {2},
	journal = {Neural Computation},
	author = {Pineda, Fernando J.},
	year = {1989},
}

@inproceedings{laborieux_improving_2024,
	title = {Improving equilibrium propagation without weight symmetry through {Jacobian} homeostasis},
	booktitle = {International {Conference} on {Learning} {Representations}},
	author = {Laborieux, Axel and Zenke, Friedemann},
	year = {2024},
}

@article{anisetti_frequency_2024,
	title = {Frequency propagation: {Multi}-mechanism learning in nonlinear physical networks},
	volume = {36},
	number = {4},
	journal = {Neural Computation},
	author = {Anisetti, Vidyesh Rao and Kandala, A. and Scellier, B. and Schwarz, J. M.},
	year = {2024},
}

@article{rumelhart_learning_1986,
	title = {Learning representations by back-propagating errors},
	volume = {323},
	number = {6088},
	journal = {Nature},
	author = {Rumelhart, David E. and Hinton, Geoffrey E. and Williams, Ronald J.},
	year = {1986},
}

@article{knight_dynamics_1972,
	title = {Dynamics of encoding in a population of neurons},
	volume = {59},
	number = {6},
	journal = {The Journal of General Physiology},
	author = {Knight, Bruce W.},
	year = {1972},
}

@article{izhikevich_solving_2007,
	title = {Solving the distal reward problem through linkage of {STDP} and dopamine signaling},
	volume = {17},
	number = {10},
	journal = {Cerebral Cortex},
	author = {Izhikevich, Eugene M.},
	year = {2007},
	keywords = {plasticity},
}

@article{zenke_superspike_2018,
	title = {Superspike: supervised learning in multilayer spiking neural networks},
	volume = {30},
	number = {6},
	journal = {Neural Computation},
	author = {Zenke, Friedemann and Ganguli, Surya},
	year = {2018},
	keywords = {Computer Science - Learning, Computer Science - Neural and Evolutionary Computing, Quantitative Biology - Neurons and Cognition, Statistics - Machine Learning},
}

@article{senn_neuronal_2024,
	title = {A neuronal least-action principle for real-time learning in cortical circuits},
	journal = {eLife},
	author = {Senn, Walter and Dold, Dominik and Kungl, Akos F. and Ellenberger, Benjamin and Jordan, Jakob and Bengio, Yoshua and Sacramento, João and Petrovici, Mihai A.},
	year = {2024},
}

@article{van_vreeswijk_chaos_1996,
	title = {Chaos in neuronal networks with balanced excitatory and inhibitory activity},
	volume = {274},
	abstract = {Neurons in the cortex of behaving animals show temporally irregular spiking patterns. The origin of this irregularity and its implications for neural processing are unknown. The hypothesis that the temporal variability in the firing of a neuron results from an approximate balance between its excitatory and inhibitory inputs was investigated theoretically. Such a balance emerges naturally in large networks of excitatory and inhibitory neuronal populations that are sparsely connected by relatively strong synapses. The resulting state is characterized by strongly chaotic dynamics, even when the external inputs to the network are constant in time. Such a network exhibits a linear response, despite the highly nonlinear dynamics of single neurons, and reacts to changing external stimuli on time scales much smaller than the integration time constant of a single neuron.},
	number = {5293},
	journal = {Science},
	author = {van Vreeswijk, C. and Sompolinsky, H.},
	year = {1996},
}

@book{kandel_principles_2000,
	title = {Principles of neural science},
	volume = {4},
	publisher = {McGraw-hill New York},
	author = {Kandel, Eric R and Schwartz, James H and Jessell, Thomas M and Siegelbaum, Steven and Hudspeth, A James and Mack, Sarah and {others}},
	year = {2000},
}

@book{gerstner_neuronal_2014,
	title = {Neuronal dynamics: {From} single neurons to networks and models of cognition},
	publisher = {Cambridge University Press},
	author = {Gerstner, Wulfram and Kistler, Werner M and Naud, Richard and Paninski, Liam},
	year = {2014},
}

@misc{lange_gymnax_2022,
	title = {gymnax: {A} {JAX}-based {Reinforcement} {Learning} {Environment} {Library}},
	url = {http://github.com/RobertTLange/gymnax},
	author = {Lange, Robert Tjarko},
	year = {2022},
}

@article{richards_deep_2019,
	title = {A deep learning framework for neuroscience},
	volume = {22},
	abstract = {Systems neuroscience seeks explanations for how the brain implements a wide variety of perceptual, cognitive and motor tasks. Conversely, artificial intelligence attempts to design computational systems based on the tasks they will have to solve. In artificial neural networks, the three components specified by design are the objective functions, the learning rules and the architectures. With the growing success of deep learning, which utilizes brain-inspired architectures, these three designed components have increasingly become central to how we model, engineer and optimize complex artificial learning systems. Here we argue that a greater focus on these components would also benefit systems neuroscience. We give examples of how this optimization-based framework can drive theoretical and experimental progress in neuroscience. We contend that this principled perspective on systems neuroscience will help to generate more rapid progress.},
	number = {11},
	journal = {Nature Neuroscience},
	author = {Richards, Blake A. and Lillicrap, Timothy P. and Beaudoin, Philippe and Bengio, Yoshua and Bogacz, Rafal and Christensen, Amelia and Clopath, Claudia and Costa, Rui Ponte and de Berker, Archy and Ganguli, Surya and Gillon, Colleen J. and Hafner, Danijar and Kepecs, Adam and Kriegeskorte, Nikolaus and Latham, Peter and Lindsay, Grace W. and Miller, Kenneth D. and Naud, Richard and Pack, Christopher C. and Poirazi, Panayiota and Roelfsema, Pieter and Sacramento, João and Saxe, Andrew and Scellier, Benjamin and Schapiro, Anna C. and Senn, Walter and Wayne, Greg and Yamins, Daniel and Zenke, Friedemann and Zylberberg, Joel and Therien, Denis and Kording, Konrad P.},
	year = {2019},
}

@article{scellier_equilibrium_2017,
	title = {Equilibrium propagation: bridging the gap between energy-based models and backpropagation},
	volume = {11},
	journal = {Frontiers in Computational Neuroscience},
	author = {Scellier, Benjamin and Bengio, Yoshua},
	year = {2017},
}

@inproceedings{laborieux_holomorphic_2022,
	title = {Holomorphic equilibrium propagation computes exact gradients through finite size oscillations},
	booktitle = {Advances in {Neural} {Information} {Processing} {Systems}},
	author = {Laborieux, Axel and Zenke, Friedemann},
	year = {2022},
}

@article{lee_difference_2015,
	title = {Difference target propagation},
	abstract = {Back-propagation has been the workhorse of recent successes of deep learning but it relies on infinitesimal effects (partial derivatives) in order to perform credit assignment. This could become a serious issue as one considers deeper and more non-linear functions, e.g., consider the extreme case of nonlinearity where the relation between parameters and cost is actually discrete. Inspired by the biological implausibility of back-propagation, a few approaches have been proposed in the past that could play a similar credit assignment role. In this spirit, we explore a novel approach to credit assignment in deep networks that we call target propagation. The main idea is to compute targets rather than gradients, at each layer. Like gradients, they are propagated backwards. In a way that is related but different from previously proposed proxies for back-propagation which rely on a backwards network with symmetric weights, target propagation relies on auto-encoders at each layer. Unlike back-propagation, it can be applied even when units exchange stochastic bits rather than real numbers. We show that a linear correction for the imperfectness of the auto-encoders, called difference target propagation, is very effective to make target propagation actually work, leading to results comparable to back-propagation for deep networks with discrete and continuous units and denoising auto-encoders and achieving state of the art for stochastic networks.},
	journal = {Machine Learning and Knowledge Discovery in Database},
	author = {Lee, Dong-Hyun and Zhang, Saizheng and Fischer, Asja and Bengio, Yoshua},
	year = {2015},
	keywords = {Computer Science - Machine Learning, Computer Science - Neural and Evolutionary Computing},
}

@article{lillicrap_random_2016,
	title = {Random synaptic feedback weights support error backpropagation for deep learning},
	volume = {7},
	abstract = {The brain processes information through multiple layers of neurons. This deep architecture is representationally powerful, but complicates learning because it is difficult to identify the responsible neurons when a mistake is made. In machine learning, the backpropagation algorithm assigns blame by multiplying error signals with all the synaptic weights on each neuron’s axon and further downstream. However, this involves a precise, symmetric backward connectivity pattern, which is thought to be impossible in the brain. Here we demonstrate that this strong architectural constraint is not required for effective error propagation. We present a surprisingly simple mechanism that assigns blame by multiplying errors by even random synaptic weights. This mechanism can transmit teaching signals across multiple layers of neurons and performs as effectively as backpropagation on a variety of tasks. Our results help reopen questions about how the brain could use error signals and dispel long-held assumptions about algorithmic constraints on learning.},
	number = {1},
	journal = {Nature Communications},
	author = {Lillicrap, Timothy P. and Cownden, Daniel and Tweed, Douglas B. and Akerman, Colin J.},
	year = {2016},
}

@article{brandt_prospective_2024,
	title = {Prospective and retrospective coding in cortical neurons},
	abstract = {Brains can process sensory information from different modalities at astonishing speed; this is surprising as the integration of inputs through the membrane of each individual neuron already causes a delayed response. Neuronal recordings in vitro reveal a possible explanation for this fast processing, in terms of individual neurons advancing their output firing rates with respect to the input, a concept which we refer to as prospective coding. The underlying mechanisms of prospective coding, however, are not completely understood. We propose a mechanistic explanation for individual neurons advancing their output on the level of single action potentials and instantaneous firing rates. We show that the spike generation mechanism can be the source for prospective (advanced) or retrospective (delayed) responses. A simplified Hodgkin-Huxley model identifies sodium inactivation as a source for prospective firing, controlling the timing of the neuron’s output as a function of the voltage and its temporal derivative. We further show that slow adaptation processes, such as spike-frequency adaptation or deactivating dendritic currents, represent mechanisms generating prospective firing for inputs that undergo slow temporal modulations. In general, we show that adaptation processes at different time scales can cause advanced neuronal responses to time-varying inputs that are modulated on the corresponding time scales.},
	journal = {arXiv preprint arXiv:2405.14810},
	author = {Brandt, Simon and Petrovici, Mihai Alexandru and Senn, Walter and Wilmes, Katharina Anna and Benitez, Federico},
	year = {2024},
	keywords = {Quantitative Biology - Neurons and Cognition},
}

@article{van_vreeswijk_chaotic_1998,
	title = {Chaotic balanced state in a model of cortical circuits},
	volume = {10},
	number = {6},
	journal = {Neural Computation},
	author = {van Vreeswijk, C. and Sompolinsky, H.},
	year = {1998},
}

@article{gerstner_eligibility_2018,
	title = {Eligibility traces and plasticity on behavioral time scales: experimental support of neohebbian three-factor learning rules},
	volume = {12},
	journal = {Frontiers in neural circuits},
	author = {Gerstner, Wulfram and Lehmann, Marco and Liakoni, Vasiliki and Corneil, Dane and Brea, Johanni},
	year = {2018},
}

@inproceedings{gu_efficiently_2022,
	title = {Efficiently modeling long sequences with structured state spaces},
	abstract = {A central goal of sequence modeling is designing a single principled model that can address sequence data across a range of modalities and tasks, particularly on long-range dependencies. Although conventional models including RNNs, CNNs, and Transformers have specialized variants for capturing long dependencies, they still struggle to scale to very long sequences of \$10000\$ or more steps. A promising recent approach proposed modeling sequences by simulating the fundamental state space model (SSM) {\textbackslash}( x'(t) = Ax(t) + Bu(t), y(t) = Cx(t) + Du(t) {\textbackslash}), and showed that for appropriate choices of the state matrix {\textbackslash}( A {\textbackslash}), this system could handle long-range dependencies mathematically and empirically. However, this method has prohibitive computation and memory requirements, rendering it infeasible as a general sequence modeling solution. We propose the Structured State Space sequence model (S4) based on a new parameterization for the SSM, and show that it can be computed much more efficiently than prior approaches while preserving their theoretical strengths. Our technique involves conditioning {\textbackslash}( A {\textbackslash}) with a low-rank correction, allowing it to be diagonalized stably and reducing the SSM to the well-studied computation of a Cauchy kernel. S4 achieves strong empirical results across a diverse range of established benchmarks, including (i) 91{\textbackslash}\% accuracy on sequential CIFAR-10 with no data augmentation or auxiliary losses, on par with a larger 2-D ResNet, (ii) substantially closing the gap to Transformers on image and language modeling tasks, while performing generation \$60{\textbackslash}times\$ faster (iii) SoTA on every task from the Long Range Arena benchmark, including solving the challenging Path-X task of length 16k that all prior work fails on, while being as efficient as all competitors.},
	booktitle = {International {Conference} on {Learning} {Representations}},
	author = {Gu, Albert and Goel, Karan and Ré, Christopher},
	year = {2022},
	keywords = {Computer Science - Machine Learning},
}

@article{baldi_contrastive_1991,
	title = {Contrastive learning and neural oscillations},
	volume = {3},
	number = {4},
	journal = {Neural Computation},
	author = {Baldi, Pierre and Pineda, Fernando},
	year = {1991},
}

@inproceedings{bai_deep_2019,
	title = {Deep equilibrium models},
	booktitle = {Advances in {Neural} {Information} {Processing} {Systems}},
	author = {Bai, Shaojie and Kolter, J Zico and Koltun, Vladlen},
	year = {2019},
}

@book{agarwal_foundations_2005,
	title = {Foundations of analog and digital electronic circuits},
	publisher = {Elsevier},
	author = {Agarwal, Anant and Lang, Jeffrey},
	year = {2005},
}

@phdthesis{werbos_beyond_1974,
	type = {Ph.{D}. thesis},
	title = {Beyond regression: new tools for prediction and analysis in the behavioral sciences},
	school = {Harvard University},
	author = {Werbos, Paul},
	year = {1974},
}

@article{williams_learning_1989,
	title = {A learning algorithm for continually running fully recurrent neural networks},
	volume = {1},
	abstract = {The exact form of a gradient-following learning algorithm for completely recurrent networks running in continually sampled time is derived and used as the basis for practical algorithms for temporal supervised learning tasks. These algorithms have (1) the advantage that they do not require a precisely defined training interval, operating while the network runs; and (2) the disadvantage that they require nonlocal communication in the network being trained and are computationally expensive. These algorithms allow networks having recurrent connections to learn complex tasks that require the retention of information over time periods having either fixed or indefinite length.},
	number = {2},
	journal = {Neural Computation},
	author = {Williams, Ronald J. and Zipser, David},
	year = {1989},
	pages = {270--280},
}

@inproceedings{orvieto_resurrecting_2023,
	title = {Resurrecting recurrent neural networks for long sequences},
	abstract = {Recurrent Neural Networks (RNNs) offer fast inference on long sequences but are hard to optimize and slow to train. Deep state-space models (SSMs) have recently been shown to perform remarkably well on long sequence modeling tasks, and have the added benefits of fast parallelizable training and RNN-like fast inference. However, while SSMs are superficially similar to RNNs, there are important differences that make it unclear where their performance boost over RNNs comes from. In this paper, we show that careful design of deep RNNs using standard signal propagation arguments can recover the impressive performance of deep SSMs on long-range reasoning tasks, while also matching their training speed. To achieve this, we analyze and ablate a series of changes to standard RNNs including linearizing and diagonalizing the recurrence, using better parameterizations and initializations, and ensuring proper normalization of the forward pass. Our results provide new insights on the origins of the impressive performance of deep SSMs, while also introducing an RNN block called the Linear Recurrent Unit that matches both their performance on the Long Range Arena benchmark and their computational efficiency.},
	booktitle = {International {Conference} on {Machine} {Learning}},
	author = {Orvieto, Antonio and Smith, Samuel L. and Gu, Albert and Fernando, Anushan and Gulcehre, Caglar and Pascanu, Razvan and De, Soham},
	year = {2023},
	keywords = {Computer Science - Machine Learning},
}

@article{lukosevicius_reservoir_2009,
	title = {Reservoir computing approaches to recurrent neural network training},
	volume = {3},
	abstract = {Echo State Networks and Liquid State Machines introduced a new paradigm in artiﬁcial recurrent neural network (RNN) training, where an RNN (the reservoir) is generated randomly and only a readout is trained. The paradigm, becoming known as reservoir computing, greatly facilitated the practical application of RNNs and outperformed classical fully trained RNNs in many tasks. It has lately become a vivid research ﬁeld with numerous extensions of the basic idea, including reservoir adaptation, thus broadening the initial paradigm to using different methods for training the reservoir and the readout. This review systematically surveys both current ways of generating/adapting the reservoirs and training different types of readouts. It offers a natural conceptual classiﬁcation of the techniques, which transcends boundaries of the current “brand-names” of reservoir methods, and thus aims to help in unifying the ﬁeld and providing the reader with a detailed “map” of it.},
	number = {3},
	journal = {Computer Science Review},
	author = {Lukoševičius, Mantas and Jaeger, Herbert},
	year = {2009},
}

@article{bellec_solution_2020,
	title = {A solution to the learning dilemma for recurrent networks of spiking neurons},
	volume = {11},
	number = {1},
	journal = {Nature Communications},
	author = {Bellec, Guillaume and Scherr, Franz and Subramoney, Anand and Hajek, Elias and Salaj, Darjan and Legenstein, Robert and Maass, Wolfgang},
	year = {2020},
	pages = {3625},
}

@inproceedings{zucchet_online_2023,
	title = {Online learning of long-range dependencies},
	abstract = {Online learning holds the promise of enabling efficient long-term credit assignment in recurrent neural networks. However, current algorithms fall short of offline backpropagation by either not being scalable or failing to learn long-range dependencies. Here we present a high-performance online learning algorithm that merely doubles the memory and computational requirements of a single inference pass. We achieve this by leveraging independent recurrent modules in multi-layer networks, an architectural motif that has recently been shown to be particularly powerful. Experiments on synthetic memory problems and on the challenging long-range arena benchmark suite reveal that our algorithm performs competitively, establishing a new standard for what can be achieved through online learning. This ability to learn long-range dependencies offers a new perspective on learning in the brain and opens a promising avenue in neuromorphic computing.},
	booktitle = {Advances in {Neural} {Information} {Processing} {Systems}},
	author = {Zucchet, Nicolas and Meier, Robert and Schug, Simon and Mujika, Asier and Sacramento, João},
	year = {2023},
	keywords = {Computer Science - Machine Learning, Computer Science - Neural and Evolutionary Computing},
}

@article{kalman_new_1960,
	title = {A new approach to linear filtering and prediction problems},
	volume = {82},
	number = {Series D},
	journal = {Transactions of the ASME-Journal of Basic Engineering},
	author = {Kalman, Rudolph Emil},
	year = {1960},
}

@inproceedings{lecun_theoretical_1988,
	title = {A theoretical framework for back-propagation},
	booktitle = {Proceedings of the 1998 {Connectionist} {Models} {Summer} {School}},
	author = {LeCun, Yann},
	year = {1988},
}

@article{popkov_gradient_2005,
	title = {Gradient methods for nonstationary unconstrained optimization problems},
	volume = {66},
	journal = {Automation and Remote Control},
	author = {Popkov, Alexey},
	year = {2005},
}

@article{polyak_introduction_1987,
	title = {Introduction to optimization},
	journal = {New York, Optimization Software},
	author = {Polyak, Boris T},
	year = {1987},
}

@incollection{movellan_contrastive_1991,
	title = {Contrastive {Hebbian} learning in the continuous {Hopfield} model},
	booktitle = {Connectionist {Models}},
	author = {Movellan, Javier R.},
	year = {1991},
}

@article{ackley_learning_1985,
	title = {A learning algorithm for {Boltzmann} machines},
	volume = {9},
	number = {1},
	journal = {Cognitive science},
	author = {Ackley, David H and Hinton, Geoffrey E and Sejnowski, Terrence J},
	year = {1985},
}

@article{mikulasch_where_2023,
	title = {Where is the error? {Hierarchical} predictive coding through dendritic error computation},
	volume = {46},
	number = {1},
	journal = {Trends in Neurosciences},
	author = {Mikulasch, Fabian A. and Rudelt, Lucas and Wibral, Michael and Priesemann, Viola},
	year = {2023},
}

@article{richards_dendritic_2019,
	series = {Neurobiology of {Learning} and {Plasticity}},
	title = {Dendritic solutions to the credit assignment problem},
	volume = {54},
	abstract = {Guaranteeing that synaptic plasticity leads to effective learning requires a means for assigning credit to each neuron for its contribution to behavior. The ‘credit assignment problem’ refers to the fact that credit assignment is non-trivial in hierarchical networks with multiple stages of processing. One difficulty is that if credit signals are integrated with other inputs, then it is hard for synaptic plasticity rules to distinguish credit-related activity from non-credit-related activity. A potential solution is to use the spatial layout and non-linear properties of dendrites to distinguish credit signals from other inputs. In cortical pyramidal neurons, evidence hints that top-down feedback signals are integrated in the distal apical dendrites and have a distinct impact on spike-firing and synaptic plasticity. This suggests that the distal apical dendrites of pyramidal neurons help the brain to solve the credit assignment problem.},
	journal = {Current Opinion in Neurobiology},
	author = {Richards, Blake A. and Lillicrap, Timothy P.},
	year = {2019},
}

@article{zucchet_beyond_2022,
	title = {Beyond backpropagation: implicit gradients for bilevel optimization},
	volume = {34},
	abstract = {This paper reviews gradient-based techniques to solve bilevel optimization problems. Bilevel optimization is a general way to frame the learning of systems that are implicitly deﬁned through a quantity that they minimize. This characterization can be applied to neural networks, optimizers, algorithmic solvers and even physical systems, and allows for greater modeling ﬂexibility compared to an explicit deﬁnition of such systems. Here we focus on gradient-based approaches that solve such problems. We distinguish them in two categories: those rooted in implicit diﬀerentiation, and those that leverage the equilibrium propagation theorem. We present the mathematical foundations that are behind such methods, introduce the gradient-estimation algorithms in detail and compare the competitive advantages of the diﬀerent approaches.},
	number = {12},
	journal = {Neural Computation},
	author = {Zucchet, Nicolas and Sacramento, João},
	year = {2022},
	keywords = {Computer Science - Machine Learning, Mathematics - Optimization and Control},
}

@inproceedings{nokland_direct_2016,
	title = {Direct feedback alignment provides learning in deep neural networks},
	booktitle = {Advances in {Neural} {Information} {Processing} {Systems}},
	author = {Nøkland, Arild},
	year = {2016},
}

@inproceedings{sacramento_dendritic_2018,
	title = {Dendritic cortical microcircuits approximate the backpropagation algorithm},
	booktitle = {Advances in {Neural} {Information} {Processing} {Systems}},
	author = {Sacramento, João and Costa, Rui P. and Bengio, Yoshua and Senn, Walter},
	year = {2018},
}

@inproceedings{meulemans_least-control_2022,
	title = {The least-control principle for local learning at equilibrium},
	booktitle = {Advances in {Neural} {Information} {Processing} {Systems}},
	author = {Meulemans, Alexander and Zucchet, Nicolas and Kobayashi, Seijin},
	year = {2022},
}

@article{whittington_approximation_2017,
	title = {An approximation of the error backpropagation algorithm in a predictive coding network with local hebbian synaptic plasticity},
	volume = {29},
	number = {5},
	journal = {Neural Computation},
	author = {Whittington, James C. R. and Bogacz, Rafal},
	year = {2017},
}

@article{ulrich_dendritic_2002,
	title = {Dendritic resonance in rat neocortical pyramidal cells},
	volume = {87},
	abstract = {Dendritic integration of synaptic signals is likely to be an important process by which nerve cells encode synaptic input into spike output. However, the response properties of dendrites to time-varying inputs are largely unknown. Here, I determine the transfer impedance of the apical dendrite in layer V pyramidal cells by dual whole cell patch-clamp recordings in slices of rat somatosensory cortex. Sinusoidal current waveforms of linearly changing frequencies (0.1–25 Hz) were alternately injected into the soma or apical dendrite and the resulting voltage oscillations recorded by the second electrode. Dendrosomatic and somatodendritic transfer impedances were calculated by Fourier analysis. At near physiological temperatures ( T∼35°C), the transfer impedance had a maximal magnitude at low frequencies ( f
              res
              ∼6 Hz). In addition, voltage led current up to ∼3 Hz, followed by a current lead over voltage at higher frequencies. Thus the transfer impedance of the apical dendrite is characterized by a low-frequency resonance. The frequency of the resonance was voltage dependent, and its strength increased with dendritic distance. The resonance was completely abolished by the I
              h
              channel blocker ZD 7288. Dendrosomatic and somatodendritic transfer properties of the apical dendrite were independent of direction or amplitude of the input current, and the responses of individual versus distributed inputs were additive, thus implying linearity. For just threshold current injections, action potentials were generated preferentially at the resonating frequency. I conclude that due to the interplay of a sag current ( I
              h
              ) with the membrane capacitance, layer V pyramids can act as linear band-pass filters with a frequency preference in the theta frequency band.},
	number = {6},
	journal = {Journal of Neurophysiology},
	author = {Ulrich, Daniel},
	year = {2002},
}

@book{dontchev_implicit_2009,
	title = {Implicit functions and solution mappings},
	author = {Dontchev, Asen L and Rockafellar, R Tyrrell},
	year = {2009},
}

@article{minorsky_directional_1922,
	title = {Directional stability of automatically steered bodies},
	volume = {34},
	number = {2},
	journal = {Journal of the American Society for Naval Engineers},
	author = {Minorsky, Nicolas},
	year = {1922},
}

@inproceedings{zinkevich_online_2003,
	title = {Online convex programming and generalized infinitesimal gradient ascent},
	abstract = {Convex programming involves a convex set F ⊆ Rn and a convex cost function c : F → R. The goal of convex programming is to ﬁnd a point in F which minimizes c. In online convex programming, the convex set is known in advance, but in each step of some repeated optimization problem, one must select a point in F before seeing the cost function for that step. This can be used to model factory production, farm production, and many other industrial optimization problems where one is unaware of the value of the items produced until they have already been constructed. We introduce an algorithm for this domain. We also apply this algorithm to repeated games, and show that it is really a generalization of inﬁnitesimal gradient ascent, and the results here imply that generalized inﬁnitesimal gradient ascent (GIGA) is universally consistent.},
	booktitle = {International {Conference} on {Machine} {Learning}},
	author = {Zinkevich, Martin},
	year = {2003},
}

@book{landau_adaptive_2011,
	series = {Communications and {Control} {Engineering}},
	title = {Adaptive control: {Algorithms}, analysis and applications},
	publisher = {Springer London},
	author = {Landau, Ioan Doré and Lozano, Rogelio and M'Saad, Mohammed and Karimi, Alireza},
	year = {2011},
}

@article{simonetto_prediction-correction_2017,
	title = {Prediction-correction algorithms for time-varying constrained optimization},
	volume = {65},
	abstract = {This paper develops online algorithms to track solutions of time-varying constrained optimization problems. Particularly, resembling workhorse Kalman ﬁltering-based approaches for dynamical systems, the proposed methods involve prediction-correction steps to provably track the trajectory of the optimal solutions of time-varying convex problems. The merits of existing prediction-correction methods have been shown for unconstrained problems and for setups where computing the inverse of the Hessian of the cost function is computationally affordable. This paper addresses the limitations of existing methods by tackling constrained problems and by designing ﬁrstorder prediction steps that rely on the Hessian of the cost function (and do not require the computation of its inverse). In addition, the proposed methods are shown to improve the convergence speed of existing prediction-correction methods when applied to unconstrained problems. Numerical simulations corroborate the analytical results and showcase performance and beneﬁts of the proposed algorithms. A realistic application of the proposed method to real-time control of energy resources is presented.},
	number = {20},
	journal = {IEEE Transactions on Signal Processing},
	author = {Simonetto, Andrea and Dall'Anese, Emiliano},
	year = {2017},
	keywords = {Mathematics - Optimization and Control},
}

@inproceedings{haider_latent_2021,
	title = {Latent {Equilibrium}: {A} unified learning theory for arbitrarily fast computation with arbitrarily slow neurons},
	abstract = {The response time of physical computational elements is ﬁnite, and neurons are no exception. In hierarchical models of cortical networks each layer thus introduces a response lag. This inherent property of physical dynamical systems results in delayed processing of stimuli and causes a timing mismatch between network output and instructive signals, thus afﬂicting not only inference, but also learning. We introduce Latent Equilibrium, a new framework for inference and learning in networks of slow components which avoids these issues by harnessing the ability of biological neurons to phase-advance their output with respect to their membrane potential. This principle enables quasi-instantaneous inference independent of network depth and avoids the need for phased plasticity or computationally expensive network relaxation phases. We jointly derive disentangled neuron and synapse dynamics from a prospective energy function that depends on a network’s generalized position and momentum. The resulting model can be interpreted as a biologically plausible approximation of error backpropagation in deep cortical networks with continuous-time, leaky neuronal dynamics and continuously active, local plasticity. We demonstrate successful learning of standard benchmark datasets, achieving competitive performance using both fully-connected and convolutional architectures, and show how our principle can be applied to detailed models of cortical microcircuitry. Furthermore, we study the robustness of our model to spatio-temporal substrate imperfections to demonstrate its feasibility for physical realization, be it in vivo or in silico.},
	booktitle = {Advances in {Neural} {Information} {Processing}},
	author = {Haider, Paul and Ellenberger, Benjamin and Kriener, Laura and Jordan, Jakob and Senn, Walter and Petrovici, Mihai A.},
	year = {2021},
	keywords = {B.8.1, Computer Science - Artificial Intelligence, Computer Science - Machine Learning, Computer Science - Neural and Evolutionary Computing, Electrical Engineering and Systems Science - Signal Processing, F.1.1, I.2.6, I.5.1, Quantitative Biology - Neurons and Cognition},
}

@inproceedings{zhao_novel_1998,
	title = {A novel technique for tracking time-varying minimum and its applications},
	volume = {2},
	abstract = {A technique for tracking the time- cating optimal solutions, associated analog circuits varying minimum of a time-dependent function is are required to solve these problems. This is also proposed i n this paper. It ensures the tracking pro- inconvenient in many applications.},
	booktitle = {{IEEE} {Canadian} {Conference} on {Electrical} and {Computer} {Engineering}},
	author = {Zhao, Y. and Swamy, M.N.S.},
	year = {1998},
}

@article{simonetto_time-varying_2020,
	title = {Time-varying convex optimization: {Time}-structured algorithms and applications},
	volume = {108},
	number = {11},
	journal = {Proceedings of the IEEE},
	author = {Simonetto, Andrea and Dall'Anese, Emiliano and Paternain, Santiago and Leus, Geert and Giannakis, Georgios B.},
	year = {2020},
}

\newpage
\appendix
\section{Theoretical derivations}
\label{app:theoretical_derivations}

\setcounter{thm}{0}

\subsection{Derivation of the error backpropagation dynamics in continuous time}

In Section~\ref{subsec:synchro_as_tracking}, we claimed that the dynamics (\ref{eqn:continuous-time-BP}) compute the error signal of backpropagation. We prove this statement by leveraging the Lagrangian derivation of \citet{lecun_theoretical_1988}. To this extent, we introduce the Lagrangian
\begin{equation}
    \mathcal{L}_\theta(u, \delta, t) = \ell(u, t) + \delta^\top (f_\theta(u, t) - u)
\end{equation}
and want to compute the derivative with respect to $\theta$ of $\ell(u^*_t, t)$ s.t. $u^*_t = f_\theta(u_t^*, t)$. Note that we write the dependency on time $t$ to be consistent with Section~\ref{subsec:synchro_as_tracking}, but we consider it fixed for now. The Lagrange multiplier method provides a way to compute the derivative of the loss: it is equivalent to computing $\mathrm{d}_\theta \mathcal{L}_\theta(u^*_t, \delta^*_t, t)$ with $u_t^* $ and $\delta_t^*$ such that $\partial_u \mathcal{L}_\theta (u_t^*, \delta_t^*, t) = 0$ and $\partial_\delta \mathcal{L}_\theta (u_t^*, \delta_t^*, t) = 0$ as
\begin{align}
    \frac{\mathrm{d}}{\mathrm{d} \theta}\mathcal{L}_\theta(u_t^*, \delta_t^*, t) & = \frac{\partial \mathcal{L}_\theta}{\partial \theta}(u_t^*, \delta^*_t, t) + \frac{\partial \mathcal{L}_\theta}{\partial u}(u_t^*, \delta^*_t, t) \frac{\mathrm{d}u_t^*}{\mathrm{d} \theta} + \frac{\partial \mathcal{L}_\theta}{\partial \delta}(u_t^*, \delta^*_t, t) \frac{\mathrm{d}\delta_t^*}{\mathrm{d} \theta}\\
    &= \frac{\partial \mathcal{L}_\theta}{\partial \theta}(u_t^*, \delta^*_t, t) + 0 + 0 \\
    &= \frac{\partial f_\theta}{\partial \theta}(u_t^*, t)^\top \delta_t^*.
\end{align}
We therefore have a simple way to estimate the gradient, given that $u_t^*$ and $\delta_t^*$ are accessible. In the case of $f_\theta(u, t) = W\rho(u)$ with $u^0_t = x_t$, as in Section~\ref{subsec:synchro_as_tracking}, we have
\begin{equation}
    \frac{\mathrm{d}}{\mathrm{d}\theta} \mathcal{L}_\theta(u_t^*, \delta_t^*, t) =  \rho(u_t^*){\delta_t^*}^\top,
\end{equation}
which is the usual backpropagation update. Note that there is a slight abuse of notation in the previous equation, as this update only applies to the entries of the weight matrix that are not identically zero.

We have justified the $\theta$-update and are now left with computing $u_t^*$ and $\delta_t^*$. One way to reach this equilibrium is to run the coupled dynamical system
\begin{equation}
    \left \{
    \begin{aligned}
        \tau \dot{u}_t &= \frac{\partial \mathcal{L}_\theta}{\partial \delta}^\top(u_t, \delta_t, t)\\
        \tau \dot{\delta}_t &= \frac{\partial \mathcal{L}_\theta}{\partial u}^\top(u_t, \delta_t, t)
    \end{aligned}
    \right .
\end{equation}
that is 
\begin{equation}
    \label{eqn:leaky_bp}
    \left \{
    \begin{aligned}
        \tau \dot{u}_t &= -u_t + W\rho(u_t) ~~\mathrm{and}~~ u_t^* = x_t \\
        \tau \dot{\delta}_t &= -\delta_t + \rho'(u_t) W^\top \delta_t ~~ \mathrm{and} ~~ \delta^L_t = \nabla \ell (u_t^L, y_t)
    \end{aligned}
    \right .
\end{equation}
for our $f_\theta$ of interest. If the time dependency in $f_\theta$ and $L$ can be ignored, the equilibrium points of these equations satisfy the stationary conditions  $\partial_u \mathcal{L}_\theta (u_t^*, \delta_t^*, t) = 0$ and $\partial_\delta \mathcal{L}_\theta (u_t^*, \delta_t^*, t) = 0$, and we can use them to compute the $\theta$-update.

However, in this paper, we are interested in the setting in which the assumption that $f_\theta$ and $l$ are independent of time breaks. Finding neural activity $u$ and error $\delta$ trajectories that satisfy the stationary equations of the Lagrangian becomes an equilibrium tracking problem. Following the leaky error backpropagation dynamics of Eq.~\ref{eqn:leaky_bp} will inevitably lead to delays in neural activity and poor error signal synchronization, as we analyzed in Theorem~\ref{thm:problem_flow}. Instead, we apply the prospective dynamics to always track equilibrium. Finally, we update $\theta$ online, using the neural activity $u_t$ and error signal $\delta_t$ currently available. This yields our \textit{Prospective BP} algorithm:
\begin{equation}
    \left  \{    
    \begin{aligned}
        \tau \dot{u}_t & = -u_t + W_t\left (\rho(u_t) + \tau \dot{\rho(u_t)} \right )\\
        \tau \dot{\delta}_t & = -\delta_t + \rho'(u_t) W_t^\top \delta_t + \tau \frac{\mathrm{d}}{\mathrm{d} t} \left [ \rho'(u_t) W^\top_t \delta_t  \right ]\\
        \tau_W \dot{W}_t& = -\delta_t \rho(u_t)^\top
    \end{aligned}
    \right .
\end{equation}
In our simulations of Section~\ref{sec:experiments}, we use these dynamics with $f_\theta(u, t) = W\rho(u)$ and a feedforward architecture ($W$ lower diagonal), and measure the loss on the last layer of the network.

As a side note, we highlight the links and differences between the algorithm derived above and (recurrent) backpropagation (RBP) equations. In short, it runs the two phases of (R)BP at the same time. Let us explain why. In (R)BP, the first phase consists in finding an equilibrium satisfying
\begin{equation}
    u_t^* = f_\theta(u_t^*, t).
\end{equation}
The resulting equilibrium can be expressed explicitly, as in the feedforward example we used in Section~\ref{sec:theory}, or, in the general case, requires to run dynamics akin to 
\begin{equation}
    \tau_u \dot{u}_t = -u_t + f_\theta(u_t, \theta).
\end{equation}
The second phase requires computing the error signal:
\begin{equation}
    \delta_t^* = \left (\mathrm{Id} - \frac{\partial f_\theta}{\partial u}(u_t^*, t) \right )^{-\top} \nabla_u \,\ell(u_t^*, t).
\end{equation}
Interestingly, this equality can be both derived from the Lagrangian perspective ($\partial_u \mathcal{L}_\theta(u_t^*, \delta^*_t, t) = 0$), or by leveraging the implicit function theorem. It holds both in the feedforward case (acyclic computational graph) or in the recurrent case (cyclic computational graph). When the underlying network is feedforward, it can easily be computed going backward in the network hierarchy \citep{rumelhart_learning_1986}. When the network is recurrent, this is more tedious and this error term is usually computed by solving the linear system:
\begin{equation}
    \left (\mathrm{Id} - \frac{\partial f_\theta}{\partial u}(u_t^*, t) \right )^\top \delta_t^* = \nabla_u \, \ell(u_t^*, t),
\end{equation}
see, e.g., \citep{almeida_backpropagation_1989, pineda_recurrent_1989, zucchet_beyond_2022}. One way to solve it, assuming $f_\theta$ and $L$ to be independent of time, is to run the dynamics
\begin{equation}
    \tau_\delta \dot{\delta}_t = -\delta_t + \frac{\partial f_\theta}{\partial u}(u_t^*, t)^\top \delta_t+ \nabla_u \ell(u_t^*, t)
\end{equation}
One can remark that the $\delta$ dynamics can be run simultaneously to the $u$ one, as $\delta$ does not influence its evolution. This corresponds to the leaky error backpropagation algorithm of (\ref{eqn:leaky_bp}).

To summarize, the continuous-time backpropagation algorithm can be obtained by framing (recurrent) backpropagation as an equilibrium-based algorithm and remarking that the two phases can be run at the same time. This way, we can get rid of one of the shortcomings of backpropagation, specifically the need for signals specifying in which phase we are and the update locking issue. A crucial step needed to derive this algorithm is to remark that the instantaneous rate-based models used in computational neuroscience and machine learning are by essence equilibrium-based.

\subsection{Proof of Theorem 1}

We here restate Theorem~\ref{thm:problem_flow} and demonstrate it.

\begin{thm}
    Let $f_\theta$ be such that the biggest eigenvalue of the symmetric part of the Jacobian $\partial_s f(s, t)$ is always smaller than a constant $1-\mu$ for $\mu>0$. Let $s^\ast_t$ satisfy $s^\ast_t = f_\theta(s^\ast_t, t)$ for all $t$ and $\lVert \dot{s}^\ast_t \rVert \leq \gamma$. Then, the trajectory of states $s_t$ obtained by integrating the leaky dynamics $\tau \dot{s}_t = -s_t + f_\theta(s_t, t)$ verifies
    \begin{equation*}
        \underset{t\rightarrow \infty}{\mathrm{limsup}} ~ \lVert s_t - s^\ast_t \rVert \leq \frac{\gamma\tau}{\mu}.
    \end{equation*}
    Furthermore, this bound is tight, that is we can find a $f$ for which the upper bound is reached.
\end{thm}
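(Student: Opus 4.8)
The plan is to control the error $e_t := s_t - s^\ast_t$ through a Lyapunov-type estimate on $\lVert e_t \rVert^2$ and then close it with a Grönwall comparison. First I would derive a linearized ODE for $e_t$: subtracting $\tau \dot s^\ast_t$ from $\tau \dot s_t = -s_t + f_\theta(s_t,t)$ and substituting the defining relation $s^\ast_t = f_\theta(s^\ast_t,t)$ gives $\tau \dot e_t = -e_t + \bigl(f_\theta(s_t,t) - f_\theta(s^\ast_t,t)\bigr) - \tau \dot s^\ast_t$. Using $C^1$ regularity of $f_\theta$ in $s$ (one of the ``mild regularity assumptions''), I rewrite the difference of $f_\theta$ values along the segment as $J_t e_t$ with $J_t := \int_0^1 \partial_s f_\theta(s^\ast_t + \sigma e_t, t)\, d\sigma$, so that $\tau \dot e_t = (J_t - \mathrm{Id}) e_t - \tau \dot s^\ast_t$.

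The key step is the energy estimate. Since taking the symmetric part is linear and the integral is a continuous average, the symmetric part of $J_t$ equals the average over $\sigma$ of the symmetric parts of $\partial_s f_\theta(s^\ast_t + \sigma e_t, t)$, each of which has all eigenvalues $\le 1-\mu$ by hypothesis; hence $J_t^{\mathrm{sym}} \preceq (1-\mu)\mathrm{Id}$ and $e_t^\top J_t e_t = e_t^\top J_t^{\mathrm{sym}} e_t \le (1-\mu)\lVert e_t\rVert^2$. Combining this with Cauchy--Schwarz on the forcing term, $-\tau e_t^\top \dot s^\ast_t \le \tau\gamma \lVert e_t\rVert$, yields $\tfrac{\tau}{2}\tfrac{d}{dt}\lVert e_t\rVert^2 = \tau e_t^\top \dot e_t \le -\mu \lVert e_t\rVert^2 + \tau\gamma \lVert e_t\rVert$. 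Wherever $\lVert e_t\rVert > 0$ this gives the scalar differential inequality $\tau\,\tfrac{d}{dt}\lVert e_t\rVert \le -\mu \lVert e_t\rVert + \tau\gamma$; the failure of $t \mapsto \lVert e_t\rVert$ to be differentiable at its zeros is handled by passing to the upper-right Dini derivative (or by working with $\lVert e_t\rVert^2$ directly). A standard comparison argument then produces $\lVert e_t\rVert \le \lVert e_0\rVert e^{-\mu t/\tau} + \tfrac{\tau\gamma}{\mu}\bigl(1 - e^{-\mu t/\tau}\bigr)$, and letting $t \to \infty$ gives the claimed $\limsup$ bound.

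For tightness I would exhibit a one-dimensional linear instance: take $f_\theta(s,t) = (1-\mu)s + \mu\gamma t$, so $\partial_s f_\theta \equiv 1-\mu$ meets the eigenvalue hypothesis with equality and the unique target is $s^\ast_t = \gamma t$ with $\lvert \dot s^\ast_t\rvert = \gamma$. The leaky dynamics reduce to $\tau \dot s_t = -\mu s_t + \mu\gamma t$, whose solution satisfies $s_t - s^\ast_t = \bigl(s_0 - s^\ast_0 + \tfrac{\tau\gamma}{\mu}\bigr)e^{-\mu t/\tau} - \tfrac{\tau\gamma}{\mu}$, so $\lvert s_t - s^\ast_t\rvert \to \tfrac{\tau\gamma}{\mu}$ and the bound is attained. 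If a bounded target is preferred, a planar example with $s^\ast_t$ tracing a circle of radius $R$ at angular speed $\omega = \gamma/R$ gives $\lVert e_t\rVert \to \tau\gamma/\sqrt{\mu^2 + \tau^2\omega^2}$, which approaches $\tau\gamma/\mu$ as $R \to \infty$.

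The main obstacle is technical rather than conceptual: making the differential-inequality-to-Grönwall step rigorous despite the non-smoothness of $\lVert e_t\rVert$ at its zeros, and verifying that the segment-averaged Jacobian $J_t$ inherits the spectral bound on its symmetric part. Once those two points are pinned down, the rest is a routine Lyapunov computation.
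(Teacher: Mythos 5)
Your proposal is correct and follows essentially the same route as the paper's proof: the segment-averaged Jacobian $J_t$ with the symmetric-part eigenvalue bound, the Lyapunov estimate on $\lVert s_t - s^\ast_t\rVert^2$ with Cauchy--Schwarz on the $\dot s^\ast_t$ forcing term, and a one-dimensional linear example (the paper uses $f(s,t)=s/2+t$, a special case of your family) for tightness. Your explicit Grönwall comparison and the Dini-derivative remark are slightly more careful than the paper's sign analysis of $\dot V$, but the argument is the same.
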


\begin{proof}
    Throughout the proof, we drop the $\theta$ subscript in $f_\theta$ for simplicity. We consider the function $g: \alpha \mapsto -s^*_t - \alpha(s_t-s^*_t) + f(s^*_t + \alpha(s_t-s_t^*), t)$ and integrate it between 0 and 1 to obtain the following relationship:
    \begin{align}
        -s_t + f(s_t,t) &= g(1) \\
        & = g(0) + \int_0^1 \frac{\mathrm{d}g}{\mathrm{d}\alpha}(\alpha) \mathrm{d}\alpha\\
        &= 0 + \left [ \int_0^1 \left (-\mathrm{Id} + \frac{\partial f}{\partial s}(s^*_t+\alpha(s_t-s^*_t),t)\right)  \mathrm{d}\alpha \right ] (s_t-s^*_t).
    \end{align}
    In the last line, we used that $s^*_t$ is an equilibrium satisfying $s^*_t = f(s^*_t, t)$ so that $g(0)=0$. It follows that
    \begin{equation}
        (-s_t + f(s_t, t))^\top (s_t- s_t^*) \leq -\mu \lVert s_t-s_t^* \rVert^2.
    \end{equation}
    To prove this, we first define $J_t$ as the integral from the equation above, and, multiplying the previous equation by $(s_t -s^*_t)$, we have
    \begin{align}
        (s_t - s_t^*)^\top (-s_t + f(s_t, t)) &= (s_t - s_t^*)^\top  J_t (s_t - s_t^*)\\
        &= (s_t-s_t^*)^\top \frac{J_t+J_t^\top}{2}(s_t-s_t^*)\\
        & \leq -\mu \lVert s_t-s_t^* \rVert^2 
    \end{align}
    The second equality holds as a quadratic form with a non-symmetric matrix takes the same values as the quadratic form with its symmetric part. The last inequality comes from the assumption on the eigenvalues of the symmetric part of $\partial_s f(s, t)$ being smaller than $1-\mu$.

    Let us now look at the Lyapunov function $V(t) := \frac{1}{2} \lVert s_t - s^*_t \rVert^2$ that measures how far the current estimate $s_t$ is to the equilibrium $s_t^*$. Its temporal derivative is equal to
    \begin{align}
        \dot{V}(t) &= (s_t - s^*_t)^\top \dot{s}_t - (s_t - s^*_t)^\top {\dot{s}^*_t} \\
        &= \tau^{-1}(s_t - s^*_t)^\top (-s_t + f(s_t,t)) - (s_t - s^*_t)^\top \dot{s}^*_t \\
        &\leq -\tau^{-1}\mu \lVert s_t-s^*_t\rVert^2 + \lVert s_t-s^*_t\rVert \lVert \dot{s}^*_t \rVert\\
        &= -2 \tau^{-1} \mu V(t) + \sqrt{2V(t)} \gamma .
    \end{align}

    The inequality is justified by the inequality we proved above and the Cauchy-Schwartz inequality. We can now study the sign of the upper bound on $\dot{V}(t)$ and observe that $\dot{V}(t) < 0$ whenever $V(t) > \frac{\tau^2 \gamma^2}{2\mu^2}$. It then implies that $\underset{t\rightarrow \infty}{\mathrm{limsup}} ~~ V(t) = \frac{\tau^2 \gamma^2}{2\mu^2}$, which is the desired result.

    We now show that the bound is tight by exhibiting an example for which the inequality in the Theorem statement is an equality. For that, we take $t = t$ and follow the dynamics $\dot{s}_t = -s_t/2+t$ ($f(s, t)= s /2 + t$, $\tau=1, \mu=1/2$), we obtain $s^*_t = 2t$, so that $\frac{\gamma \tau}{\mu} = 4$. Alternatively, solving the differential equation gives $s_t = (s_0 + 4)\exp(-t/2) + 2t - 4$ so $\lVert s_t^*- s_t\rVert \rightarrow 4$.
\end{proof}

\paragraph{Remarks.} Let us remark on several things:
\begin{itemize}
    \item The result and the proof are inspired by a result from \cite[Chapter 6]{polyak_introduction_1987} for time-varying optimization, which assumes $f(s, t)=\nabla_s E(s,t)$. We extend this result to general $f$, which requires changing the $\mu$ strong convexity assumption needed in the time-varying optimization setting to the assumption we have on the eigenvalues of the Jacobian in Theorem~\ref{thm:problem_flow}. A discrete-time version of this result can be obtained by adapting the proof of \cite{popkov_gradient_2005}. 
    \item In the main text, we claimed that $\dot{s}^*_t$ both increases with how fast external input varies and depends on the geometry of $\partial_s f_\theta$. This is because the implicit function theorem applied to $s^*_t - f_\theta(s^*_t, t)=0$ gives
    \begin{equation}
        \dot{s}_t^* = - \left (\mathrm{Id} - \frac{\partial f_\theta}{\partial s}(s^*_t, t)\right)^{-1} \frac{\partial f_\theta}{\partial t}(s^*_t, t).
    \end{equation}
    The inverse matrix reflects the geometry of the network and $\partial_t f_\theta$ how fast the inputs to the neurons change.
\end{itemize}

\subsection{Proof of Theorem~\ref{thm:perfect_tracking}}
\begin{thm}
    Let $s$ follow the prospective dynamics (\ref{eqn:prospective_dynamics}). Then, assuming that $\partial_s f_\theta(s_t,t)$ is always invertible and $f_\theta$ is Lipschitz continuous in $t$ on that trajectory, we have 
    \begin{equation*}
        \lVert s_t - f_\theta(s_t, t) \rVert = \lVert s_0 - f_\theta(s_0, 0) \rVert \exp \left (-\frac{t}{\tau} \right )
    \end{equation*}
    and
    \begin{equation*}
        \underset{t \rightarrow \infty}{\mathrm{limsup}} ~ \lVert s_t - s_t^\ast \rVert = 0,
    \end{equation*}
    i.e., after an initial exponential convergence phase, $s_t$ and $s_t^*$ coincide.
\end{thm}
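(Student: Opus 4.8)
The plan is to stop tracking $s_t$ directly and instead follow the \emph{residual} $e_t := s_t - f_\theta(s_t, t)$, show it contracts exponentially at rate $1/\tau$, and only then translate this into a statement about $s_t - s_t^\ast$. Differentiating in time and using the chain rule gives $\dot e_t = \dot s_t - \mathrm{d}_t f_\theta(s_t, t)$, while the prospective dynamics (\ref{eqn:prospective_dynamics}) can be rearranged as $\tau\big(\dot s_t - \mathrm{d}_t f_\theta(s_t,t)\big) = -\big(s_t - f_\theta(s_t,t)\big)$, so that $\tau\dot e_t = -e_t$. Integrating the linear ODE gives $e_t = e_0\exp(-t/\tau)$ as vectors, hence $\lVert s_t - f_\theta(s_t,t)\rVert = \lVert s_0 - f_\theta(s_0,0)\rVert\exp(-t/\tau)$, the first displayed identity. (Implicit in ``let $s$ follow (\ref{eqn:prospective_dynamics})'' is that $\dot s_t$ is well defined: expanding $\mathrm{d}_t f_\theta(s_t,t) = \partial_s f_\theta(s_t,t)\,\dot s_t + \partial_t f_\theta(s_t,t)$ and solving for $\dot s_t$ is exactly what the invertibility hypothesis secures, and it also recasts the dynamics in the prediction-correction form discussed in the Methods.)

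Next I would relate $d_t := s_t - s_t^\ast$ to $e_t$. Since the target obeys $s_t^\ast = f_\theta(s_t^\ast, t)$, we may write $d_t = e_t + \big(f_\theta(s_t,t) - f_\theta(s_t^\ast,t)\big)$, and then, exactly as in the proof of Theorem~\ref{thm:problem_flow}, apply the fundamental theorem of calculus along the segment joining $s_t^\ast$ and $s_t$ to get $f_\theta(s_t,t) - f_\theta(s_t^\ast,t) = J_t\, d_t$ with $J_t := \int_0^1 \partial_s f_\theta\big(s_t^\ast + \alpha(s_t - s_t^\ast), t\big)\,\mathrm{d}\alpha$. This yields the implicit relation $(\mathrm{Id} - J_t)\,d_t = e_t$. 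Provided $\mathrm{Id} - J_t$ is invertible with $\lVert(\mathrm{Id}-J_t)^{-1}\rVert$ bounded uniformly along the trajectory, we conclude $\lVert d_t\rVert \le C\,\lVert e_0\rVert\exp(-t/\tau)\to 0$, which is the second claim.

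I expect the last point to be the real obstacle: guaranteeing the uniform bound on $\lVert(\mathrm{Id}-J_t)^{-1}\rVert$. Steps 1--2 are essentially algebra and the rate $1/\tau$ falls out for free, but $J_t$ is an average of Jacobians at points between $s_t^\ast$ and $s_t$, so invertibility must be controlled on a whole region the trajectory could visit, not merely at $s_t^\ast$. The cleanest route is a bootstrapping argument: once $t$ is large, the already-established smallness of $e_t$ forces $d_t$ to be small via a local implicit-function-theorem argument anchored at $s_t^\ast$ (where $\mathrm{Id}-\partial_s f_\theta$ is invertible by hypothesis), hence $J_t$ is close to $\partial_s f_\theta(s_t^\ast,t)$ and $\mathrm{Id}-J_t$ is boundedly invertible, closing the loop. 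One should additionally check the trajectory does not drift into a degenerate region before this asymptotic regime, which may require either a mildly global form of the invertibility assumption or an a priori boundedness argument, together with the Lipschitz-in-$t$ regularity to keep $\partial_s f_\theta(s_t^\ast,t)$ and its inverse under uniform control.
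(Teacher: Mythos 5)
Your first step is exactly the paper's proof: rewrite the prospective dynamics as $\tau\,\mathrm{d}_t\bigl[s_t - f_\theta(s_t,t)\bigr] = -\bigl[s_t - f_\theta(s_t,t)\bigr]$, integrate the linear ODE for the residual, and read off the exponential decay at rate $1/\tau$. So the core of the argument is identical, and your parenthetical remark that the invertibility hypothesis is what makes $\dot s_t$ well defined (by solving $\mathrm{d}_t f_\theta = \partial_s f_\theta\,\dot s_t + \partial_t f_\theta$ for $\dot s_t$) is a point the paper leaves implicit.

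Where you diverge is the second claim. The paper disposes of $\mathrm{limsup}_{t\to\infty}\lVert s_t - s_t^\ast\rVert = 0$ in one sentence, saying only that the assumptions ensure $s_t^\ast$ is well defined and continuous; you instead write $(\mathrm{Id}-J_t)\,d_t = e_t$ with $J_t$ the averaged Jacobian along the segment from $s_t^\ast$ to $s_t$, and correctly observe that the conclusion requires a uniform bound on $\lVert(\mathrm{Id}-J_t)^{-1}\rVert$ over a region the trajectory could visit, not merely invertibility at points of the trajectory. This is a real issue the paper glosses over: decay of the residual $s_t - f_\theta(s_t,t)$ does not by itself force decay of $s_t - s_t^\ast$ without some quantitative nondegeneracy of $\mathrm{Id}-\partial_s f_\theta$ (note also that the theorem's stated hypothesis is invertibility of $\partial_s f_\theta$, whereas what is actually needed, both for defining $s^\ast$ via the implicit function theorem and for your step, is invertibility of $\mathrm{Id}-\partial_s f_\theta$). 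Your bootstrapping suggestion is a reasonable way to close that gap under a slightly strengthened (locally uniform) invertibility assumption; under the hypotheses of Theorem~\ref{thm:problem_flow}, where the symmetric part of $\partial_s f_\theta$ is bounded above by $1-\mu$, the bound $\lVert d_t\rVert \le \lVert e_t\rVert/\mu$ follows immediately and no bootstrapping is needed. In short: same mechanism as the paper for the first identity, and a more honest accounting of what the second identity actually requires.
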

\begin{proof}
The first result follows from rewriting Equation~\ref{eqn:prospective_dynamics} as $\tau \mathrm{d}_t \left [ s_t - f_\theta(s_t, t) \right] = - \left [s_t - f_\theta(s_t, t) \right ]$. It follows that $s_t - f_\theta(s_t, t) = c_0 e^{-\frac{t-t_0}{\tau}}$ , $\lim_{t\rightarrow \infty} \lVert s_t - f_\theta(s_t, t) \rVert = 0$ and  ${\mathrm{limsup}}_{t \rightarrow \infty} ~ \lVert s_t - s_t^* \rVert = 0$.
The assumptions on $f$ ensure that $s_t^*$ is well defined and evolves continuously.
\end{proof}

\section{Physical implementation of prospective neurons}
\label{app:physical_implementation}

\subsection{Adaptive neurons can be prospective}  In the main text, we have claimed that for $u$ and $a$ following the adaptive neuron dynamics
\begin{equation}
    \begin{split}
        \tau \dot{u}_t &= -u_t + \left ( 1 + \frac{\tau}{\tau_a} \right )f_\theta(u_t, t)  - \frac{\tau}{\tau_a} a_t\\
        \tau_a \dot{a}_t &= -a_t + f_\theta(u_t,t),
    \end{split}
\end{equation}
we have
\begin{equation}
    \label{app_eqn:perturbation_adaptive}
    \tau \dot{u}_t = -u_t + f_\theta(u_t, t) + \tau \mathrm{d}_t f_\theta(u_t, t) + \tau \tau_a \mathrm{d}_t^2 f_\theta(u_t, t) + O(\tau \tau_a^2).
\end{equation}
We prove this statement. First, remark that $a_t$ is a low-pass filter version of $f_\theta(u_t, t)$, with a time-constant $\tau_a$ so that
\begin{equation}
    a_t = \frac{1}{\tau_a}\int_{0}^t \exp\left(-\frac{t-t'}{\tau_a}\right) f_\theta(u_{t'},t') \mathrm{d}t'.
\end{equation}
Integration by parts (integrating the exponential, differentiating $f_\theta$) then gives
\begin{align}
    a_t &= \left [ \exp\left(-\frac{t-t'}{\tau_a} \right) f_\theta(u_{t'},t') \right ]_0^t - \int_0^t \exp\left(-\frac{t-t'}{\tau_a} \right) \frac{\mathrm{d}}{\mathrm{d}t} \left [ f_\theta(u_{t'},t') \right ] \mathrm{d}t'\\
    &= f_\theta(u_{t},t) - \exp\left(-\frac{t}{\tau_a}\right)f_\theta(u_{0},0) - \int_0^t \exp\left(-\frac{t-t'}{\tau_a}\right) \frac{\mathrm{d}}{\mathrm{d}t} \left [ f_\theta(u_{t'},t') \right ] \mathrm{d}t'.
\end{align}
Therefore, assuming that $t$ is sufficiently far from the boundary condition $t = 0$, 
\begin{equation}
    \label{app_eqn:adaptive_derivative}
    \frac{f_\theta(u_{t}, t) - a_t}{\tau_a} =\frac{1}{\tau_a} \int^t_0 \exp\left (-\frac{t- t'}{\tau_a}\right ) \frac{\mathrm{d}}{\mathrm{d}t} \left [ f_\theta(u_{t'},t') \right ] \mathrm{d}t',
\end{equation}
holds. We can now use this result to derive Eq.~\ref{app_eqn:perturbation_adaptive}. Let us first remark that $\tau_a^{-1} \left ( f_\theta(u_t, t) - a_t \right )$ converges to $\mathrm{d}_t f_\theta(u_t, t)$ when $\tau_a$ goes to 0. We are interested in the first-order error in $\tau_a$between these two quantities:
\begin{align}
    & \frac{f_\theta(u_t, t) - a_t}{\tau_a} - \frac{\mathrm{d}}{\mathrm{d}t} \left [ f_\theta(u_t, t) \right ] \\ 
    & ~~~ = \frac{1}{\tau_a}\int^t_0 \exp\left (-\frac{t- t'}{\tau_a}\right ) \left ( \frac{\mathrm{d}}{\mathrm{d}t} \left [ f_\theta(u_{t'},t') \right ] - \frac{\mathrm{d}}{\mathrm{d}t} \left [ f_\theta(u_{t},t) \right ] \right )\mathrm{d}t'\\
    & ~~~ = \frac{1}{\tau_a}\int^t_0 \exp\left (-\frac{t- t'}{\tau_a}\right ) \left ( (t'-t) \frac{\mathrm{d}^2}{\mathrm{d}t^2} \left [ f_\theta(u_{t},t) \right ] + O\left ((t'-t)^2\right) \right )\mathrm{d}t'\\
    & ~~~ = \tau_a \frac{\mathrm{d}^2}{\mathrm{d}t^2} \left [ f_\theta(u_{t},t) \right ] + O(\tau_a^2)
\end{align}
Note that in all these calculations, we have assumed $t \gg 0$. The first equation uses the fact that the integral of $\tau_a^{-1} \exp \left (-\tau_a^{-1} (t-t') \right )$ between $0$ and $\infty$ is 1. The second one leverages the Taylor expansion of $\mathrm{d}_t f_\theta(u_t, t)$. For it to be mathematically rigorous, one must assume that the third-order time derivative of $f_\theta(u_t, t)$ is uniformly bounded and make use of the Lagrange formulation of Taylor's remainder. Finally, the last one comes from the standard integral values
\begin{equation}
    \frac{1}{\tau_a} \int_0^\infty (t'- t) \exp \left (- \frac{t-t'}{\tau_a} \right ) \mathrm{d}t' = \tau_a
\end{equation}
and
\begin{equation}
    \frac{1}{\tau_a} \int_0^\infty  (t'- t)^2 \exp \left (- \frac{t-t'}{\tau_a} \right ) \mathrm{d}t' = 2\tau_a^2.
\end{equation}
It follows that the adaptive neuron dynamics become
\begin{equation}
    \tau \dot{u}_t = -u_t + f_\theta(u_t, t) + \tau \mathrm{d}_t f_\theta(u_t, t) + \tau \tau_a \mathrm{d}_t^2 f_\theta(u_t, t) + O(\tau\tau_a^2).
\end{equation}

\subsection{Time constant mismatch}

Here we derive the first order deviation in $(\tau' - \tau)$ of the trajectory $u_t$ when there is a mismatch between the prospective input and the neuron time constants:
\begin{equation}
    \label{eq:recall_diffeq}
    \tau \dot{u}_t = -u_t + f_\theta(u_t, t) + \tau' \frac{\mathrm{d}}{\mathrm{d}t} \left [ f_\theta(u_t, t) \right ].
\end{equation}
We start by making the difference $(\tau' - \tau)$ appear in Eq. \ref{eq:recall_diffeq}:
\begin{equation}
    \tau \dot{u}_t = -u_t + f_\theta(u, t) + \tau \frac{\mathrm{d}}{\mathrm{d}t} \left [ f_\theta(u_t,t) \right ] + (\tau'-\tau) \frac{\mathrm{d}}{\mathrm{d}t} \left [ f_\theta(u_t,t) \right ]. 
\end{equation}
We see that the equation without mismatch appears, and we know by Theorem~2 that $u^{\ast}_t$ is the solution of this equation after a transitory period.
Therefore, we assume that $u_0 = u_0^*$ at initialization, which allows us to expand the mismatched trajectory as:
\begin{equation}
    \label{eq:expansion_ut}
    u_t = u^{\ast}_t + (\tau'-\tau){\rm d}_\tau u_t + O\left ((\tau-\tau')^2 \right ).
\end{equation}
We can perform the same expansion for the other terms appearing in Eq. \ref{eq:recall_diffeq}:
\begin{equation}
    \label{eq:expansion_fut}
    f_\theta(u_t,t) = f_\theta(u^{\ast}_t, t) + (\tau' - \tau) \frac{\partial f_\theta}{\partial u}(u^{\ast}_t, t) \frac{{\rm d} u_t}{{\rm d}\tau} +  O\left ((\tau'-\tau)^2 \right ),
\end{equation}
and
\begin{equation}
    \label{eq:expansion_dtfut}
    \frac{\mathrm{d}}{\mathrm{d}t} \left [ f_\theta(u_t, t) \right ]= \frac{\mathrm{d}}{\mathrm{d}t} \left [ f_\theta(u^{\ast}_t, t) \right ] + (\tau' - \tau) \frac{\partial}{\partial u} \left [ \frac{\mathrm{d}}{\mathrm{d}t} \left [ f_\theta(u^{\ast}_t, t) \right ] \right ] \frac{{\rm d} u_t}{{\rm d}\tau} + O\left ((\tau'-\tau)^2 \right ).
\end{equation}
Then, by plugging Eq. \ref{eq:expansion_ut}, \ref{eq:expansion_fut}, and \ref{eq:expansion_dtfut} into Eq. \ref{eq:recall_diffeq}, and keeping only the terms in $(\tau' - \tau)$ we obtain the following differential equation on ${\rm d}_\tau u_t$:
\begin{equation}
    \label{eq:mismatch_diffeq_general}
    \tau \dot{\frac{{\rm d} u_t}{{\rm d} \tau}} = \left ( -\mathrm{Id} + \frac{\partial}{\partial u} \left  [f_\theta(u^{\ast}_t, t) + \tau \frac{\mathrm{d}}{\mathrm{d}t} \left [ f_\theta(u^{\ast}_t, t) \right ] \right ] \right ) \frac{{\rm d}u_t}{{\rm d}\tau} + \frac{\mathrm{d}}{\mathrm{d}t} \left [ f_\theta(u^{\ast}_t, t) \right ].
\end{equation}
We now proceed to compute the terms appearing in (\ref{eq:mismatch_diffeq_general}) in order to solve it.
First, by definition of equilibrium we have
\begin{equation}
    \frac{\mathrm{d}}{\mathrm{d}t} \left [ f_\theta(u^{\ast}_t,t) \right ] = \dot{u}^{\ast}_t.
\end{equation}
However, replacing $f_\theta(u^{\ast}_t,t)$ by $u^{\ast}_t$ cannot be used when we compute partial derivatives with respect to $u$.
This is because both functions are equal uniformly across time, while having different variations with respect to $u$.
To keep our analysis simple, we now turn to the case $f_\theta(u_t, t) = W\rho(u_t) + W_{\rm in} x_t$. 
We have that the Jacobian at the equilibrium trajectory is equal to
\begin{equation}
    \frac{\partial f_\theta}{\partial u}(u^{\ast}_t, t) = W \odot {\rm diag}(\rho'(u^{\ast}_t)) := W'(t).
\end{equation}
Finally, the Jacobian of the time derivative of $f_\theta(u_t, t)$ is
\begin{align}
    \tau \frac{\partial}{\partial u} &\left [ \frac{\mathrm{d}}{\mathrm{d}t} \left [f_\theta(u^{\ast}_t,t) \right ] \right ] \\&= \tau \frac{\partial}{\partial u} \left [(W_{\rm in} \dot{x}_t + W \rho'(u^{\ast}_t)\odot\dot{u}^{\ast}_t) \right ] \\
    &= \tau W \frac{\partial}{\partial u} \left [ \rho'(u^{\ast}_t)\odot\dot{u}^{\ast}_t \right ] \\
    &= \tau W {\rm diag}(\rho''(u^{\ast}_t)\odot\dot{u}^{\ast}_t) + \tau W'(t) \frac{\partial \dot{u}^{\ast}_t}{\partial u} \\ 
    &= W{\rm diag} \left (\rho''(u^{\ast}_t) \odot \left (-u^{\ast}_t +W_{\rm in} x_t + W \rho(u^{\ast}_t) +\tau \frac{{\rm d}}{{\rm d}t}\left [f_\theta(u^{\ast}_t,t)\right ]\right ) \right ) \label{eq:wsecond}\\
    &+ W'(t) \frac{\partial}{\partial u} \left [-u^{\ast}_t +W_{\rm in} x_t + W \rho(u^{\ast}_t) +\tau \frac{\rm d}{{\rm d} t} \left [ f_\theta(u^{\ast}_t,t) \right ] \right ].
\end{align}
Here we used the following matrix calculus identity
\begin{equation}
    \frac{\partial}{\partial x} \left [ f(x)\odot g(x) \right ] = {\rm diag}(g(x)) \cdot \frac{\partial f}{\partial x}   +  {\rm diag}(f(x)) \cdot  \frac{\partial g}{\partial x}.
\end{equation}
We call $W''(t)$ the term of \eqref{eq:wsecond}.
Then, we continue with
\begin{align}
    \tau \frac{\partial}{\partial u} \left [ \frac{\rm d}{{\rm d}t} \left [f_\theta(u^{\ast}_t,t) \right ] \right ] &= W''(t) \\ &~~~+ W '(t) \frac{\partial}{\partial u} \left [-u^{\ast}_t +W_{\rm in} x_t + W \rho(u^{\ast}_t) +\tau \frac{\rm d}{{\rm d} t} \left [ f_\theta(u^{\ast}_t,t) \right ] \right ] \\
    &= W''(t) - W'(t) + W'(t)^2 + \tau W'(t) \frac{\partial }{\partial u} \left [ \frac{\rm d}{{\rm d}t} \left [f_\theta(u^{\ast}_t,t) \right ] \right ]
\end{align}
so that 
\begin{equation}
    \tau ({\rm Id} - W'(t)) \frac{\partial}{\partial u} \left [ \frac{\rm d}{{\rm d}t} \left [f_\theta(u^{\ast}_t,t) \right ] \right ] = W''(t) - ({\rm Id} - W'(t)) W'(t)
\end{equation}
and
\begin{equation}
     \tau \frac{\partial}{\partial u} \left [ \frac{\rm d}{{\rm d}t} \left [f_\theta(u^{\ast}_t,t) \right ] \right ] = \left ({\rm Id} - W'(t)\right)^{-1}W''(t) - W'(t).
\end{equation}
Here, we assume that the activation function $\rho$ has a negligible second order derivative $\rho'' \approx 0$.
This approximation is actually an equality in cases where the activation function is piece-wise linear, such as the hard sigmoid or the ReLU.
This assumption makes $W''(t)=0$ for all $t$ and we simply have:
\begin{equation}
    \tau \frac{\partial}{\partial u} \left [ \frac{\rm d}{{\rm d}t} \left [f_\theta(u^{\ast}_t,t) \right ] \right ] =  - W'(t).
\end{equation}
We are now in a position to replace all the terms in \eqref{eq:mismatch_diffeq_general}:
\begin{align}
    \tau \dot{\frac{{\rm d}u_t}{{\rm d}\tau}} &= \left(-\mathrm{Id} + W'(t) - W'(t) \right) \frac{{\rm d}u_t}{{\rm d}\tau} + \dot{u}^{\ast}_t \\ 
     \tau \dot{\frac{{\rm d}u_t}{{\rm d}\tau}} &= - \frac{{\rm d}u_t}{{\rm d}\tau} + \dot{u}_t^\ast. 
\end{align}
Solving this differential equation, and after a transitory period ($t \gg 0$), the mismatch is given by:
\begin{equation}
    \frac{{\rm d}u_t}{{\rm d}\tau} = \frac{1}{\tau} \int^t_0 \exp \left(-\frac{t-t'}{\tau} \right)\dot{u}^{\ast}_t ~{\rm d}t'.
\end{equation}
Assuming $\gamma := \max_t \lVert \dot{u}_t^\ast \rVert$, we thus have
\begin{equation}
    \left \lVert \frac{{\rm d}u_t}{{\rm d}\tau} \right \rVert \leq \gamma,
\end{equation}
that is, using Equation~\ref{eq:expansion_ut},
\begin{equation}
    \lVert u_t - u_t^\ast \rVert \leq |\tau' - \tau| \gamma + O\left((\tau - \tau')^2 \right ).
\end{equation}

We emphasize that this is a local result around $\tau' = \tau$ and that the picture may be qualitatively different further away from $\tau$. Let us consider the following example:
\begin{equation}
    \tau \dot u = -u + w u + w \tau' \dot u,
\end{equation}
which is obtained by setting $f(u) = wu$, so that 
\begin{equation}
    (\tau - w \tau') \dot{u} = -(1-w) u
\end{equation}
For $\tau' = \tau$, $u$ converges to $0$ as long as $w < 1$. However, whenever $\tau' > \tau / w$, the dynamics becomes unstable and $u$ exponentially diverges. This highlights that the bound derived above no longer holds in this region. Finally, we note that the case $\tau' = \tau / w$ is degenerate, as only a constant state equal to $0$ satisfies the differential equation (assuming $w \neq 1$). Our analysis does not consider this case.

\section{Additional details on holomorphic equilibrium propagation}
\label{app:hep}

Building on holomorphic equilibrium propagation \citep{laborieux_holomorphic_2022, laborieux_improving_2024}, we demonstrate how prospective dynamics enable neurons to track oscillating feedback signals. In this framework, error terms $\delta$ are computed by introducing finite-size oscillating feedback $\beta_t$ in the complex plane. The key challenge is that oscillations must be slow enough for neurons to equilibrate, introducing a timescale constraint. We now show that prospective dynamics circumvent this limitation, allowing effective tracking regardless of the neurons' intrinsic timescales or the oscillation frequency.

We consider the following neural dynamics:
\begin{equation}
    \tau_u \dot{u}_t = -u_t + W\rho(u_t) + \beta_t (u_t^\mathrm{out} - y_t), ~~\mathrm{with}~~ u_t^\mathrm{in} = x_t.
\end{equation}
Here, $u$ and $\beta$ are complex-valued vectors that evolve over time.
Compared to the one we used as an example in Section~\ref{sec:theory}, the dynamics contains two additional elements: First, the weight matrix includes both feedforward connections that are responsible for processing the input and backward connections that send teaching signals from the output to hidden neurons. Second, the dynamics includes a nudging force $\beta_t (u^\mathrm{out}_t - y_t)$ that pushes output neurons towards the target $y_t$.
The error gradient is multiplied by the oscillating nudging strength $\beta_t = |\beta| \exp(\frac{2 \mathrm{i} \pi t}{\tau_\beta})$. 
Provided that $u_t$ tracks the equilibrium points $u_t^*$ imposed by the feedback $\beta_t$, and that $\beta$-oscillations are much faster than $x$ and $y$, the error is encoded in the first mode of the oscillations:
\begin{equation}
    \delta_t = \frac{1}{\tau_\beta |\beta|}\int_{t-\tau_\beta}^{t} u_{t'} e^{-2 \mathrm{i}\pi t'/ \tau_\beta} \mathrm{d}t'  \,.
\end{equation}
The weight update is the same as Eq.~\ref{eq:weight_update_eq_bp}, where the pre-synaptic term $\rho(u)_i$ is measured as the average of the oscillations induced by $\beta_t$. \citep{laborieux_improving_2024} show that this update closely approximates the gradient.
Whether such complex neurons can be implemented in the brain remains an open question. Alternatively, \cite{anisetti_frequency_2024} have shown that oscillations can exactly compute the gradient in the weak nudging limit. 
Finally, we argue that the separation of timescales needed here is less restrictive than the one assuming infinitely fast neurons; here, only the teaching signal, a small part of the network, has to be fast.

\section{Additional simulation details for the inverted pendulum}
\label{app:additional_details_pendulum}

We use \texttt{gymnax} \citep{lange_gymnax_2022} to simulate the inverted pendulum system through Euler integration ($\Delta t = 1$ms). At each timestep, the agent receives the state $ s_t \in \mathbb{R}^4$ consists of position, velocity, angle, angular velocity ($s_t = [x_t, \dot{x}_t, \theta_t, \dot{\theta}_t]$). An episode terminates when the state vector is outside of the range $-2.4 \leq x \leq 2.4, -12^\circ \leq \theta \leq 12^\circ$.
Typically, for the task, the agent receives a reward of 1 at every time step within the admissible region. Due to our online near-continuous learning regime, we reformulate this as a reward rate 
\begin{equation}
    r(s(t)) = 
    \begin{cases}
        1, & \text{if } -2 \leq x \leq 2, -8.5^\circ \leq \theta \leq 8.5^\circ\\
        0, & \text{otherwise}
    \end{cases}
\end{equation}
Note that prospective neurons take time to react to discontinuity. As a result, it is impossible to learn the right policy solely due to the lack of reward at the last timestep before the end of the episode. Therefore, we use a reward rate that restricts the agent to a smaller rewarded region than the environment's termination condition, allowing for more time steps spent with a reward of $0$.

To train the agent, we adopt the continuous RL formalization, where the value function at time $t$ is defined by
\begin{equation}
    V(s(t)) = \int_t^\infty \exp\left(-\frac{t'-t}{\tau_v}\right) r(s(t')) \mathrm{d}t'
\end{equation}
$\tau_v$ controls the discounting of future rewards. 

\begin{align}
    \dot{V}(t) &= \lim_{\Delta t \rightarrow 0} \frac{1}{\Delta t}\left[\int_{t+\Delta t}^{\infty} \exp\left(-\frac{t'-t-\Delta t}{\tau_v}\right)r(t')\mathrm{d}t' \right. \\
    &~~~~~~~~~~~~~~~~~~~~~~~~~~~~~~~~~~~~~ \left . - \int_{t}^{\infty}\exp\left(-\frac{t'-t}{\tau_v}\right)r(t')\mathrm{d}t'\right]\\
    &= \lim_{\Delta t \rightarrow 0} \frac{1}{\Delta t}\left[- \int_{t}^{t+\Delta t}\exp\left(-\frac{t'-t}{\tau_v}\right)r(t')\mathrm{d}t'  \right. \\
    &~~~~~~~~~~~~~~~~~~~~~~~~~~~~~~~~~~~~~ \left . + \exp\left(\frac{\Delta t}{\tau_v}\right)\int_{t+\Delta t}^{\infty} \exp\left(-\frac{t'-t}{\tau_v}\right)r(t')\mathrm{d}t' \right]\\
    &= \lim_{\Delta t \rightarrow 0} \frac{1}{\Delta t}\left(- \int_{t}^{t+\Delta t}\exp\left(-\frac{t'-t}{\tau_v}\right)r(t')\mathrm{d}t' + \exp\left(\frac{\Delta t}{\tau_v}\right)V(t) \right)\\
    &= -r(t) + \frac{1}{\tau_v}V(t)
\end{align}
Therefore, the continuous TD error is
\begin{equation}
    \label{eqn:critic_cont}
    \delta(t) = r(t) - \frac{1}{\tau_v}V(t) + \dot{V}(t)
\end{equation}
To train the critic, we want to minimize $$E(t) = \frac{1}{2}|\delta(t)|^2$$
Discretizing Eq~\ref{eqn:critic_cont}, we estimate $\dot{V}$ using finite differences
\begin{align}
    \delta(t) &= r(t) - \frac{1}{\tau}V(t) + \frac{V(t) - V(t-\Delta t)}{\Delta t}\\
    &= r(t) + \frac{1}{\Delta t}\left(\left(1 - \frac{\Delta t}{\tau_v}\right)V(t) - V(t-\Delta t)\right)
\end{align}
\begin{equation}
    \hat{\delta}_t =  \delta(t)\Delta t =  r(t)\Delta t + \left(1 - \frac{\Delta t}{\tau_v}\right)V(t) - V(t-\Delta t)
\end{equation}
We thus obtain the appropriate scaling for the reward per timestep $\hat{r}_t = r(t)\Delta t$ and the discount rate $\gamma = 1 - \frac{\Delta t}{\tau_v}$.
We train the agent with the prospective version of the online A2C as detailed in Algorithm \ref{alg:a2c}. Conceptually, we denote the activations (including the network outputs) of the policy and value networks as column vectors ${u}_{\mathrm{p}, t} \in \mathbb{R}^{66}$ ($64$ hidden neurons, $2$ action neurons),  ${u}_{\mathrm{v}, t} \in \mathbb{R}^{65}$ ($64$ hidden neurons, $1$ value neuron) respectively. The feedforward inputs at each step are then given by
\begin{equation}
    f_{\mathrm{p}/\mathrm{v}}\left( {s}_t,  {u}_{\mathrm{p}/ \mathrm{v}, t}\right) = W_{\mathrm{p}/ \mathrm{v}}
    \begin{pmatrix}
         {s}_t\\
        \operatorname{ReLU}\left( {u}_{\mathrm{p}/\mathrm{v}, t}\right)
    \end{pmatrix}
    +  {b}_{\mathrm{p}/\mathrm{v}}
\end{equation}
where $W$ are lower triangular block matrices parameterized appropriately to carry out feedforward computation.  

The loss functions are
\begin{align}
    &\hat{\delta}_t =  r(t)\Delta t + \left(1 - \frac{\Delta t}{\tau_v}\right){u}^{\mathrm{out}}_{\mathrm{v}, t} - {u}_{\mathrm{v}, t-1}^{\mathrm{out}}\\
    &\mathcal{L}_{\mathrm{actor}}\left( {u}_{\mathrm{p}, t}, a\right) = -\hat{\delta_t}\log \left(\left[\operatorname{Softmax}\left( {u}^{\mathrm{out}}_{\mathrm{p}, t}\right)\right]_{a}\right) \\
    &\mathcal{L}_{\mathrm{critic}}\left( {u}_{\mathrm{v}, t}, r\right) = \hat{\delta_t}^2
\end{align}
The optimizer we use is SGD with early stopping (\texttt{critic\_lr}$=1e^{-6}$, \texttt{actor\_lr}$=1e^{-4}$, $\tau_v=1$, $\tau_u=\tau_\delta=1$) with a batch size of 1. We train the agent for $10000$ episodes, and each episode has a maximum time of $T=4$s ($4000$ steps).
\begin{algorithm}[h!]
\caption{Online advantage actor critic (A2C) with prospective neurons}
\label{alg:a2c}
\begin{algorithmic}[1]
\State Initialize  policy and value networks $f_{\mathrm{p}, \mathrm{v}}$ with parameters $ {\theta_0},  {\phi_0}$
\State Initialize environment with appropriate $\Delta t$
\For{each training episode}
\State Initialize state variables $ {u}_{\mathrm{p} / \mathrm{v}, 0}$
\State Initialize gradient variables $ {\delta}_{\mathrm{p} / \mathrm{v}, 0}$ 
\State Reset env and observe the initial state $ {s}_0$
\While{not done \textbf{and} $t<$ \texttt{MAX\_LENGTH}}
    \State $a_t \sim  {u}^{\mathrm{out}}_{\mathrm{p}, t}$ \Comment{Sample action}
    \State $r_t,  {s}_{t+1} \leftarrow \operatorname{env}\left( {s}_t, a_t\right)$ \Comment{Step environment}
    \vspace{0.4cm}
    \State Estimate TD error with prospective dynamics of $ {u}_\mathrm{v}$:
    \State \vspace{-0.8cm}\begin{multline*}
        ~~~~~~~{u}_{\mathrm{v},t+1} \leftarrow  {u}_{\mathrm{v},t} + \frac{\Delta t}{\tau} \left(- {u}_{\mathrm{v},t} + f_\mathrm{v}\left( {s}_{t+1},  {u}_{\mathrm{v},t}\right)\right) \\+ \left(f_\mathrm{v}\left( {s}_{t+1},  {u}_{\mathrm{v},t}\right) - f_\mathrm{v}\left( {s}_t,  {u}_{\mathrm{v},t-1}\right)\right)\end{multline*}
    \vspace{-0.8cm}
    \State $A_t \leftarrow r_t + \left(1-\frac{\Delta t}{\tau_v}\right)  {u}^{\mathrm{out}}_{\mathrm{v},t+1} -  {u}^{\mathrm{out}}_{\mathrm{v},t}$
    \vspace{0.4cm}
    \State Estimate prospective dynamics for $ {u}_\mathrm{p}$:
    \State \vspace{-0.8cm}\begin{multline*}
        ~~~~~~~{u}_{\mathrm{p}, t+1} \leftarrow  {u}_{\mathrm{p},t} + \frac{\Delta t}{\tau} \left(- {u}_{\mathrm{p},t} + f_{\mathrm{p}}\left( {s}_{t+1},  {u}_{\mathrm{p},t}\right)\right) \\+ \left(f_{\mathrm{p}}\left( {s}_{t+1},  {u}_{\mathrm{p},t}\right) - f_{\mathrm{p}}\left( {s}_t,  {u}_{\mathrm{p}, t-1}\right)\right)
        \end{multline*}
        \vspace{-0.4cm}
    \State Backpropagated error dynamics:
    \State \vspace{-0.8cm}\begin{multline*}
        ~~~~~~~{\delta}_\mathrm{p, t+1} \leftarrow  {\delta}_{\mathrm{p},t} + \frac{\Delta t}{\tau} \left(- {\delta}_\mathrm{p, t} + \nabla_ {u}\mathcal{L}_{\mathrm{actor}}\left( {u}_{\mathrm{p}, t}, a_t, A_t\right)\right) \\+ \left(\nabla_ {u}\mathcal{L}_{\mathrm{actor}}\left( {u}_{\mathrm{p}, t}, a_t, A_t\right) - \nabla_ {u}\mathcal{L}_{\mathrm{actor}}\left( {u}_{\mathrm{p},t-1}, a_{t-1}, A_{t-1}\right)\right)
    \end{multline*}
    \vspace{-0.8cm}
    \State \vspace{-0.8cm}\begin{multline*}
        ~~~~~~~{\delta}_{\mathrm{v}, t+1} \leftarrow  {\delta}_{\mathrm{v},t} + \frac{\Delta t}{\tau} \left(- {\delta}_{\mathrm{v}, t} + \nabla_ {u}\mathcal{L}_{\mathrm{critic}}\left(  {u}_{\mathrm{v},t}, r_t\right)\right) \\ + \left(\nabla_ {u}\mathcal{L}_{\mathrm{critic}}\left( {u}_{\mathrm{v},t}, r_t\right) - \nabla_{u}\mathcal{L}_{\mathrm{critic}}\left(  {u}_{\mathrm{v},t-1}, r_{t-1}\right)\right)
    \end{multline*}
    \vspace{-0.4cm}
    \State Update parameters:
    \State $\theta_{t+1} \leftarrow \theta_t - \frac{\Delta t}{\tau_\theta}\left( {\partial_\theta f_\mathrm{p}}^\top {\delta}_{\mathrm{p}, t+1}\right)$
    \State $\phi_{t+1} \leftarrow \phi_t - \frac{\Delta t}{\tau_\phi}\left(  {\partial_\phi f_{\mathrm{v}}}^\top  {\delta}_{\mathrm{v},{t+1}}\right)$
\EndWhile
\EndFor
\end{algorithmic}
\end{algorithm} 
\end{document}